\def\Q{\mathbf{Q}}
\def\P{\mathbf{P}}
\def\I{\mathbf{I}}
\def\n{\mathbf{n}}
\def\x{\mathbf{x}}
\def\_v{\mathbf{v}}
\def\S{\mathbf{S}}
\crefname{hypothesis}{Hypothesis}{Hypotheses}
\newcommand{\TheTitle}{Smectic Liquid Crystal in Modified LdG Model}
\newcommand{\TheAuthors}{B. Shi, Y. Han, C. Ma, A. Majumdar, L. Zhang}
\headers{\TheTitle}{\TheAuthors}
\title{{\TheTitle}
\thanks{LZ is supported by the National Natural Science Foundation of China (grants 12225102, T2321001, 12288101).
AM is supported by the University of Strathclyde New Professors Fund, 
a Friedrich Wilhelm Bessel Research Award from the Humboldt Foundation, a Leverhulme Research Project Grant RPG-2021-401, and the INI network grant supported by EP/R014604/1. This work was partially supported by Royal Society Newton Advanced Fellowship NAF/R1/180178 awarded to AM and LZ.  YH is supported by a Leverhulme Research Project Grant RPG-2021-401.}
}
\author{Baoming Shi\thanks{School of Mathematical Sciences, Peking University, Beijing 100871, China (\email{ming123@stu.pku.edu.cn}).}
\and  Yucen Han\thanks{Department of Mathematics and Statistics, University of Strathclyde, G1 1XQ, UK (\email{yucen.han@strath.ac.uk}).}
\and  Chengdi Ma\thanks{School of Mathematical Sciences, Peking University, Beijing 100871, China (\email{mcd2020@stu.pku.edu.cn}).}
\and  Apala Majumdar\thanks{Department of Mathematics and Statistics, University of Strathclyde, G1 1XQ, UK (\email{apala.majumdar@strath.ac.uk}).}
\and Lei Zhang\thanks{Beijing International Center for Mathematical Research, Center for Quantitative Biology, Center for Machine Learning Research, Peking University, Beijing 100871, China (\email{zhangl@math.pku.edu.cn}).}
}
\renewcommand{\TheTitle}{A Modified Landau-de Gennes Theory for Smectic Liquid Crystals: Phase Transitions and Structural Transitions}
\begin{document}

\maketitle
\textcolor{red}{This work has been published in SIAM Journal on Applied Mathematics, 2025, 85(2): 821-847.
Please refer to the official publication for citation. The codes can be obtained
at https://github.com/BaomingShi/searching-minimizer-mLdG-model.
}
\\
\begin{abstract}
We mathematically model Smectic-A (SmA) phases with a modified Landau-de Gennes (mLdG) model as proposed in \cite{xia2021structural}. The orientational order of the SmA phase is described by a tensor-order parameter $\Q$, and the positional order is described by a real scalar $u$, which models the deviation from the average density of liquid crystal molecules. 
Firstly, we prove the existence and regularity of global minimisers of the mLdG free energy in three-dimensional settings. Then, we analytically prove that the mLdG model can capture the Isotropic-Nematic-Smectic phase transition as a function of temperature, under some assumptions. 
Further, we explore stable smectic phases on a square domain, with edge length $\lambda$, and tangent boundary conditions. 
We use heuristic arguments to show that defects repel smectic layers and that nematic ordering promotes layer formation. We use asymptotic arguments in the $\lambda\to 0$ and $\lambda\to\infty$ limits which reveal the correlation between the number and thickness of smectic layers, the amplitude of density fluctuations with the phenomenological parameters in the mLdG energy. 
For finite values of $\lambda$, we numerically recover BD-like and D-like stable smectic states observed in experiments. 
We also study the frustrated mLdG energy landscape and give  numerical examples of transition pathways between distinct mLdG energy minimisers. 
\end{abstract}

\begin{keywords}
modified Landau–de Gennes model, smectic liquid crystals, phase transition, defect configurations
\end{keywords}

\begin{AMS}
	35Qxx, 49Mxx, 35J20
\end{AMS}

\section{Introduction}
Liquid crystals are mesophases intermediate between the solid and liquid states, characterized by orderly molecular arrangements \cite{de1993physics}, that is, the constituent molecules align along
certain locally preferred directions, referred to as ``directors" in the literature. These orderly molecular arrangements give rise to distinctive optical and electrical properties in liquid crystals, making them valuable in display technologies, optical devices, and sensors \cite{LAGERWALL20121387,bisoyi2021liquid,edwards2020interaction}. Liquid crystals can exhibit different phases, such as nematic and smectic phases. The nematic phase has long-range orientational order but lacks positional order, while the smectic phase possesses both long-range orientational order and partial positional order, leading to a layered structure with positional coherence within the layers \cite{de1972analogy}. There are  several smectic phases, such as Smectic-A and Smectic-C, each with distinct characteristics \cite{han2015microscopic}. In the Smectic-A phase, the director is parallel to the normal of the layer. In contrast, in the Smectic-C phase, there is a non-zero angle between the director and the normal of the layer. In this paper, we focus on the Smectic-A phase, which will simply be referred to  as the ``smectic phase''.

External constraints, such as confinement and boundary anchoring, can induce deformations in the liquid crystal. These deformations may not coincide with the liquid crystal phase in the bulk, leading to geometric frustrations. As a result, a diverse array of textures with characteristic defect structures may spontaneously assemble \cite{huang2022defect, williams1975dislocations}. For instance, when a smectic liquid crystal is deposited on a substrate that promotes varying boundary anchoring, their layers may bend and form focal conic domains (FCDs) \cite{kleman2000grain}. These FCDs have been utilized as guides for colloidal dispersion \cite{milette2012reversible}, in soft lithography \cite{kim2010self}, and as templates for superhydrophobic surfaces \cite{kim2014creation}. The experimental observations in \cite{cortes2016colloidal} suggest a stable BD-type smectic profile on square domains, with a pair of line defects localised near a pair of opposite square edges. 
This BD-type configuration is stablised by the smectic positional ordering but is 
unstable for pure nematics \cite{tsakonas2007multistable,yin2020construction,shi2023hierarchies}, which indicates the distinctive properties of confined smectic configurations.

Recent work has focused on the Nematic-Smectic (N-S) phase transition and the coupling between directors and smectic layers. For example, the existence of geometric memory in the N-S transition leads to FCDs melting into a dense array of boojums defects \cite{suh2019topological}. By using colloidal silica rods and leveraging their significant density difference with the dispersing solvent, nematic and smectic phases can be confined within a single chamber which produces a smectic-nematic interface, and the directors in the smectic-nematic interface leave fingerprints in the nematic slice \cite{cortes2016colloidal}. On spherical shells, the N-S phase transition, or the emergence of the layer structure, initially occurs on the thicker side of the shell, distant from the point defects \cite{liang2011nematic}. These experimental findings inspire us to mathematically study the N-S phase transition and structural phase transitions in confined smectic systems.

The very complicated structures that emerge in the frustrated smectic phase are challenging to model mathematically. The key point in modelling the smectic phase is to incorporate the nematic director and the layer structure, i.e. an additional positional order parameter must be introduced to describe the modulation of the density of liquid crystal molecules, compared to a simple nematic phase. In recent decades, several powerful continuum mathematical theories have been used for the nematic phase such as the microscopic Onsager model, the macroscopic Landau-de Gennes (LdG) model, the macroscopic Oseen-Frank model, and the Ericksen-Leslie model \cite{wang2021modeling}. For modelling the smectic phase, an additional positional order parameter is required to construct the layered structure. For instance, the extended Maier-Saupe model \cite{mcmillan1971simple} is a molecular model for the smectic phase, which qualitatively predicts the N-S phase transition as a function of temperature. The molecular model is computationally expensive due to its inherent high-dimensional complexity but its parameters can be correlated with the underpinning molecular structures. For computational convenience, there have been competing phenomenological theories for smectic phases, obtained by adding the density modulation to the Oseen-Frank energy or the LdG energy for nematic phases \cite{chen1976landau,pevnyi2014modeling,ball2015discontinuous,xia2021structural,izzo2020landau,biscari2007landau}. These phenomenological models can successfully predict the structures observed in experiments, although there is no mapping between the phenomenological parameters and structural details e.g., properties of smectic layers. Most of the existing work focuses on numerical results, with a lack of interpretability of the models, which is essential for studying structural phase transitions and for controlling properties of confined smectic systems. 

We address some of these questions by a systematic study of the modified Landau-de Gennes (mLdG) model as presented in \cite{xia2021structural}, which is adept at capturing geometric frustration, FCDs, and oily streaks \cite{michel2006structure}, commonly observed in experiments on confined smectics. In the mLdG framework, there are two order parameters: the LdG nematic order parameter, $\Q$, described in detail in the next sections that encodes the (nematic) directors and a positional order parameter $u$, which models the deviation of the molecular density from the average density. The mLdG free energy comprises the LdG free energy (which depends on $\Q$ and its gradient), a bulk smectic energy pivotal for nematic-smectic transitions, and a nematic-smectic coupling energy.  The bulk smectic energy is a standard Ginzburg-Landau potential or a quartic polynomial in $u$. The nematic-smectic coupling energy depends on the Hessian of $u$ and is parameterised by two phenomenological parameters - a coupling strength, $B_0$, and a second parameter, $q$, which determines the thickness and multiplicity of smectic layers, at least in the mLdG framework. The nematic-smectic coupling energy  determines the relative alignment of the layer normals and directors, and within the remit of our work, the mLdG energy minimisers have co-aligned layer normals and directors and are hence, thought to model the SmA phase.  The mLdG energy minimisers model stable and experimentally relevant (observable) smectic phases. 


Firstly, in Section~\ref{sec: model}, we prove the existence and regularity of the mLdG energy minimiser, in three-dimensional settings, subject to strong and weak versions of experimentally relevant tangent boundary conditions for the directors. The tangent boundary conditions require the directors to be tangent to (in the plane of) the boundary. In Section~\ref{phase transition}, we analytically study the Isotropic-Nematic-Smectic phase transitions as a function of temperature. The LdG bulk energy has analytic critical points: isotropic and nematic critical points under periodic boundary conditions \cite{de1993physics}. We cannot find analytic critical points of the mLdG energy easily and this poses technical challenges in demonstrating that the mLdG energy can capture the N-S phase transition. We show that there are two critical temperatures, $T_1^* > T_2^*$ in the mLdG model. Then we prove that the nematic phase loses stability at $T<T_2^*$ by studying the second variation of the mLdG energy and provide an analytic estimate of the Morse index of the nematic critical point for $T<T_2^*$. We further use the Crandall and
Rabinowitz bifurcation theorem \cite{crandall1971bifurcation} to demonstrate that the nematic phase undergoes a pitchfork bifurcation at $T=T_2^*$, accompanied by the appearance of layered smectic structures. In Section~\ref{Sec: confinement}, we demonstrate that the nematic-smectic coupling term favours the formation of layered structures in regions of strong nematic or orientational ordering, again something which could be experimentally checked. 
Lastly, we study mLdG energy minimisers on square domains, as a function of the temperature and square edge length $\lambda$, subject to tangent boundary conditions for the directors on the edges. We draw on parallels with the nematic study in \cite{han2020reduced, luo2012multistability, yin2020constrained}, provide some physical interpretations of the phenomenological parameters in the bulk smectic energy and the nematic-smectic coupling energy and also give some numerical examples of transition pathways between distinct energy minimisers. The energy landscape is very frustrated with multiple minimisers, that have subtle differences in their structural properties, and this introduces new challenges in the study of mLdG solution landscapes. We conclude with some perspectives in Section~\ref{conclusion}. 

\section{Theoretical framework}\label{sec: model}
The Landau-de Gennes (LdG) model \cite{de1993physics} is the most celebrated continuum theory for nematic liquid crystals and has been hugely successful for describing the Isotropic-Nematic (I-N) phase transition and structural transitions for nematics \cite{majumdar2010equilibrium}. The LdG theory describes the nematic phase by the LdG $\Q$-tensor order parameter, which is a traceless and symmetric $3\times 3$ matrix. The $\Q$ tensor is isotropic if $\Q=0$, uniaxial if $\Q$ has a pair of degenerate nonzero eigenvalues, and biaxial if $\Q$ has three distinct eigenvalues \cite{de1993physics}. A uniaxial nematic phase has a single distinguished direction of averaged molecular alignment, modelled by the eigenvector with the non-degenerate eigenvalue. A biaxial nematic phase has a primary and a secondary nematic director. In approximately two-dimensional (2D) scenarios, we can use the reduced Landau-de Gennes (rLdG) model, with the rLdG order parameter - a symmetric and traceless $2\times2$ matrix with only two degrees of freedom: one degree of freedom for the nematic director in the plane and the second degree of freedom describes the degree of ordering about the 2D director \cite{yin2020construction,han2020reduced,shi2022nematic,10.1093/imamat/hxad031}. In this paper, we use a modified LdG (mLdG) theory to study confined smectic phases, wherein we use either the LdG or the rLdG order parameter to describe the orientational/nematic ordering with an additional real-valued positional order parameter $u$  and additional energy terms to describe the intrinsic layering of smectic phases \cite{xia2021structural,xia2023variational}. 

\subsection{Preliminaries} \label{sec:prelim}
The modified Landau-de Gennes (mLdG) energy is proposed in \cite{xia2021structural,xia2023variational} and is given by
\begin{equation}
    E(\Q,u)=\int_{\Omega} \left(f_{LdG}(\Q) + f_{bs}(u)+ f_{int}(\Q,u)\right)\mathrm{d}\x,
    \label{energy}
\end{equation}
where $\Omega \subset \mathbb{R}^3$ is the working domain, the nematic order parameter $\Q(\x)\in \mathbb{R}^{3\times 3}$, and the positional order parameter, $u(\x)\in\mathbb{R}$, models the deviation of the molecular density from the average molecular density at position $\x$. The positional order parameter, $u$, is the real part of the complex order parameter in \cite{de1972analogy}. For further details, we refer the reader to \cite{pevnyi2014modeling}. The first term in \eqref{energy} is the LdG free energy density,
\begin{equation}
    f_{LdG}(\Q): = \frac{K}{2}\left| \nabla \Q \right|^2 + f_{bn}\left( \Q \right), 
\end{equation}
where $K$ is a positive material-dependent elastic constant. The elastic energy density penalizes spatial inhomogeneities, and the thermotropic bulk energy density, $f_{bn}$, dictates the preferred nematic liquid crystal (NLC) phase as a function of temperature,
\begin{equation}\label{f_B}
    f_{bn}(\Q): = \frac{A}{2}\mathrm{tr} \Q^2 - \frac{B}{3} \mathrm{tr} \Q^3 + \frac{C}{4} (\mathrm{tr} \Q^2)^2,\\
\end{equation}
where $A=\alpha_1 (T - T_1^*)$ is the rescaled temperature, with $\alpha_1>0$, and $T_1^*$ is a characteristic liquid crystal temperature; $B, C>0$ are material-dependent bulk constants. For example, typical values for the representative NLC material MBBA are $B=0.64\times10^4 \text{Nm}^{-2}$, $C=0.35\times10^4 \text{Nm}^{-2}$ and $K=4\times10^{-11} \text{N}$ \cite{majumdar2010equilibrium, wojtowicz1975introduction}. The minimisers of $f_{bn}$ depend on $A$ and determine the NLC phase for spatially homogeneous samples. The minimiser of $f_{bn}$ is the isotropic state for $A> \frac{B^2}{27 C}$. For $A<\frac{B^2}{27 C}$, the minimisers of $f_{bn}$ constitute a continuum of $\Q$-tensors defined below:

\[
\mathcal{N} =  \left\{ \Q = s_+\left(\mathbf{n}\otimes \mathbf{n} - \frac{\mathbf{I_3}}{3} \right) \right\},s_+ = \frac{B + \sqrt{B^2 - 24 AC}}{4C},
\]
where $\mathbf{n}$ is an arbitrary unit vector field (referred to as the nematic director), and $\mathbf{I}_3$ is the $3\times 3$ identity matrix.

The second term in \eqref{energy} is the bulk energy density of the smectic order parameter $u$, which can be derived from the Landau theory of phase transitions \cite{de1972analogy,pevnyi2014modeling,izzo2020landau}:
\begin{equation}
f_{bs}(u)=\frac{a}{2}u^2 + \frac{b}{3} u^3 +  \frac{c}{4} u^4,
\end{equation}
where $a=\alpha_2(T-T^*_2)$ is a temperature-dependent parameter with $\alpha_2>0$, and $T_2^*<T_1^*$ is a critical material temperature related to N-S phase transition; 
$b,c>0$ are material-dependent constants. A non-zero $b$ will result in non-symmetrical layer structures \cite{pevnyi2014modeling}, and we take $b=0$ to study symmetric layer structures. When $a<0$, i.e. the temperature is low enough, the minimisers of $f_{bs}(u)$ prefer a non-zero density distribution, $u$. 

The third term in \eqref{energy} is the coupling term between the smectic and nematic order parameters:
\begin{equation}\label{eq: coupling term}
f_{int}(\Q,u) =
\begin{cases}
B_0 \left|D^2u\right|^2, A\geqslant \frac{B^2}{24C},\\
B_0 \left|D^2u+q^2\left(\frac{\Q}{s_+}+\frac{\mathbf{I}_3}{3}\right)u\right|^2, otherwise,
\end{cases}
\end{equation}
where $B_0$ is a phenomenological coupling constant between $\Q$ and $u$, typically chosen to be on the scale of \( 1/q^4 \) to counterbalance the magnitude of the coupling energy density \( f_{\text{int}}\) \cite{xia2021structural,pevnyi2014modeling}. $D^2 u$ is the Hessian of $u$. For a low temperature $T<T_2^*$ and assuming a uniaxial $\Q = s_+\left(\n\otimes\n-\I_3/3\right)$, $f_{int}$ is minimized by $u=sin(q\n\cdot \x)$, which corresponds to a layered structure that has the layer normal co-aligned with the uniaxial director $\n$, characteristic of the SmA phase.  Consequently, $q$ is often identified with the wave number of the SmA layers \cite{xia2021structural,pevnyi2014modeling}, and is expected to be related to the SmA layer thickness $l$ by $q = 2\pi/l$. The layer thickness $l$ of a homogeneous SmA,
is usually slightly larger than the long axis of a rod-like liquid crystal molecule, $L$, but less than $2L$ \cite{mei2015molecular}. The layer thickness of the equal mass mixture of 8OPhPy8 and 6OPhPy8 in the SmA phase is about $28.5$ Angstrom in \cite{Enz2013Electrospun}. 

The admissible $\Q$-tensors belong to the space 
\begin{equation}
W^{1,2}_{\Omega,\S_0}=\left\{\Q \in \S_0|\Q \in W_\Omega^{1,2}\right\},
\end{equation}
and the admissible smectic order parameter, $u$, belongs to $W^{2,2}_\Omega$,
where 
\begin{equation}
\begin{aligned}
\S_0:=&\left\{\Q \in \mathbb{R}^{3\times 3}: \Q_{ij}=\Q_{ji},\sum_{i = 1}^3 \Q_{ii}=0\right\},\\
 W_\Omega^{k,p}=& \left\{ u:\Omega \to \mathbb{R}:  \int_\Omega \left(|u|^p+\sum_{|\alpha|\leqslant k}|D^\alpha u|^p \right)\mathrm {d}x<\infty \right\}.
\end{aligned}
\end{equation}

To study the Isotropic-Nematic-Smectic (I-N-S) phase transition and structural transitions in confinement, we consider three different kinds of boundary conditions: 
(1) Periodic boundary condition for $\Q$ and $u$ on a  one-dimensional domain $\Omega=[0,h]$:
\begin{equation}
\begin{cases}
    \Q(0)=\Q(h),D_x\Q(0)=D_x\Q(h), \\
    u(0)=u(h),D_x u(0)=D_x u(h),D_{xx} u(0)=D_{xx} u(h).
    \end{cases}
    \label{periodic bc}
\end{equation}
We impose periodic boundary conditions on the derivative of $\Q$ to ensure that $\Q$ is smooth at the boundaries. Similarly, we impose periodic boundary condition on the second derivative of $u$.

(2) Dirichlet boundary conditions for $\Q$ \cite{yin2020construction,han2020reduced} and natural boundary condition for $u$ are specified as follows,
\begin{equation}
\begin{cases}
\Q=\Q_{bc} \text{ on } \partial \Omega,\\
\left(D^2u+q^2\left(\frac{\mathbf{Q}}{s_+}+\frac{\I_3}{3}\right)u\right): \vec{\nu} \otimes \vec{\nu}=0, \text{ on } \partial \Omega, \\
\left[\nabla \cdot \left(D^2u+q^2\left(\frac{\Q}{s_+}+\frac{\I_3}{3}\right)u\right) \right]\cdot\vec{\nu}+\nabla \cdot \left[ \P_{\vec{\nu}} \left(D^2u+q^2\left(\frac{\mathbf{Q}}{s_+}+\frac{\I_3}{3}\right)u\right)\vec{\nu} \right]=0, \text{ on } \partial \Omega,
\end{cases}
\end{equation}
with the specified Dirichlet boundary $\Q_{bc}\in W^{\frac{1}{2},2}_{\partial \Omega,\S_0}$, where $W^{\frac{1}{2},2}_{\partial \Omega,\S_0}$ is a fractional order Sobolev space which is the image space of the trace operator on $W^{1,2}_{\Omega,\S_0}$ \cite{adams2003sobolev}. \(\vec{\nu}\) is the outer normal vector, and \(\mathbf{P}_{\vec{\nu}} = \mathbf{I}_3 - \vec{\nu} \otimes \vec{\nu}\) is the projection matrix onto the tangential plane of the boundary. One admissible example is the tangential Dirichlet boundary condition in \cite{10.1093/imamat/hxad031}, for which the nematic director is tangent to or in the plane of the domain boundary and such boundary conditions are motivated by experiments \cite{tsakonas2007multistable}. The natural boundary condition for $u$ implies that the molecular density distribution is unconstrained. 

(3) We can also use weak boundary conditions or surface energies for the LdG order parameter as shown below 
\cite{nobili1992disorientation}, and the total energy is 
\begin{equation}
  \Tilde{E}(\Q,u)= E(\Q,u)+\omega \int_{\partial \Omega} \Vert\Q-\Q_{bc}\Vert^2 \mathrm{d}S, \text{ }\omega\geqslant 0,
  \label{weak anchoring energy}
\end{equation}
where $\omega\geqslant 0$ is the penalty strength. From the method of variations, the critical point of \eqref{weak anchoring energy} satisfies the weak anchoring boundary conditions for $\Q$ \cite{shi2024multistability} and natural boundary condition for $u$,
$$
\begin{cases}
\frac{\partial{\Q}}{\partial \vec{\nu}}+\frac{2\omega}{K}\left(\Q-\Q_{bc}\right)=0 \text{ on } \partial \Omega\\
\left(D^2u+q^2\left(\frac{\mathbf{Q}}{s_+}+\frac{\I_3}{3}\right)u\right): \vec{\nu} \otimes \vec{\nu}=0, \text{ on } \partial \Omega,\\
\left[\nabla \cdot \left(D^2u+q^2\left(\frac{\Q}{s_+}+\frac{\I_3}{3}\right)u\right) \right]\cdot\vec{\nu}+\nabla \cdot \left[ \P_{\vec{\nu}} \left(D^2u+q^2\left(\frac{\mathbf{Q}}{s_+}+\frac{\I_3}{3}\right)u\right)\vec{\nu} \right]=0, \text{ on } \partial \Omega.
\end{cases}
$$

\subsection{The proofs of existence and regularity}

\begin{proposition}
The mLdG energy functional \eqref{energy} has at least a global minimiser $(\Tilde{\Q},\Tilde{u})$ in $W^{1,2}_{\Omega,\S_0}\times W^{2,2}_\Omega$, subject to the above three types of boundary conditions. 
\end{proposition}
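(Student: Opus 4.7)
The plan is to apply the direct method of the calculus of variations: first show that the energy is coercive and bounded below on $W^{1,2}_{\Omega,\S_0}\times W^{2,2}_\Omega$, then extract a weakly convergent minimising subsequence via Banach-Alaoglu, and finally verify weak lower semicontinuity of every term in \eqref{energy}. To get coercivity of the LdG part, I would use the positive quartic term $\frac{C}{4}(\mathrm{tr}\,\Q^2)^2=\frac{C}{4}|\Q|^4$ to absorb $-\frac{B}{3}\mathrm{tr}\,\Q^3$ and $\frac{A}{2}\mathrm{tr}\,\Q^2$ via Young's inequality, yielding a lower bound of the form $c_1\|\nabla\Q\|_{L^2}^2+c_2\|\Q\|_{L^4}^4-C$. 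An analogous argument using $\frac{c}{4}u^4$ in $f_{bs}$ controls $\|u\|_{L^4}^4$.

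The coupling term is the delicate one. When the smectic-preferring regime is active, I would expand the square and apply Young's inequality to get
\begin{equation*}
B_0\int_\Omega \Bigl|D^2u + q^2\Bigl(\tfrac{\Q}{s_+}+\tfrac{\I}{d}\Bigr)u\Bigr|^2\mathrm{d}\x \geqslant (1-\epsilon)B_0\int_\Omega |D^2u|^2\,\mathrm{d}\x - C_\epsilon\int_\Omega(|\Q|^2+1)u^2\,\mathrm{d}\x,
\end{equation*}
and then use $|\Q|^2u^2\leqslant \tfrac12(|\Q|^4+u^4)$ to absorb the negative part into the bulk quartic terms already controlled. This yields a uniform bound on $\|D^2u\|_{L^2}$; combined with the $L^2$ control on $u$ and a Gagliardo-Nirenberg interpolation for $\nabla u$, I get a full $W^{2,2}$ bound. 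Hence any minimising sequence $(\Q_n,u_n)$ admits a subsequence converging weakly in $W^{1,2}_{\Omega,\S_0}\times W^{2,2}_\Omega$ to a candidate $(\Tilde{\Q},\Tilde{u})$, and by Rellich-Kondrachov, strongly in $L^p\times W^{1,p}$ for all subcritical $p$.

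With these convergences in hand, weak lower semicontinuity follows term by term. The integrals of $f_{bn}(\Q_n)$ and $f_{bs}(u_n)$ pass to the limit by dominated convergence against the uniform $L^4$ bounds. The principal elastic quantities $\|\nabla\Q\|_{L^2}^2$ and $\|D^2u\|_{L^2}^2$ are weakly lsc as squared norms. For the full coupling term I would expand $|D^2u+q^2(\Q/s_++\I/d)u|^2$: the Hessian-squared piece is weakly lsc; the zeroth-order piece converges strongly since $\Q_n u_n\to \Tilde{\Q}\Tilde{u}$ strongly in $L^2$; and the cross term is a weak-strong pairing of $D^2u_n\rightharpoonup D^2\Tilde{u}$ in $L^2$ against a strongly convergent factor, hence convergent. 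Therefore $\liminf E(\Q_n,u_n)\geqslant E(\Tilde{\Q},\Tilde{u})$.

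Finally, each of the three boundary regimes is handled as follows. In the periodic case, the trace operator is (weakly) continuous on $W^{1,2}$ and $W^{2,2}$, so periodicity persists in the weak limit; the natural boundary conditions on $u$ are encoded in the variational formulation and require no a priori enforcement. In the Dirichlet case, weak continuity of the trace $W^{1,2}_\Omega\to W^{1/2,2}_{\partial\Omega}$ gives $\Tilde{\Q}|_{\partial\Omega}=\Q_{bc}$. In the weak anchoring case, the surface energy $\omega\int_{\partial\Omega}\|\Q-\Q_{bc}\|^2\mathrm{d}S$ is non-negative and weakly lsc by compactness of the trace, so it adds an lsc contribution to $\Tilde{E}$. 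I expect the main technical obstacle to be the clean absorption of the coupling cross terms by the bulk quartic terms with constants uniform across all admissible temperature regimes, in particular ensuring that the temperature dependence of $s_+$ does not degrade the Young constants in the smectic regime where $s_+$ can be small.
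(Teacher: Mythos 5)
Your proposal is correct in its overall architecture (direct method, term-by-term weak lower semicontinuity, trace arguments for the three boundary regimes), but it takes a genuinely more quantitative route to coercivity than the paper, and your self-identified ``main technical obstacle'' is precisely where the two diverge. The paper never proves a norm-coercive lower bound with absorption: it observes that each of the three energy densities is separately bounded below (the quartic bulk terms have positive leading coefficients and the coupling density is a non-negative square), so along any bounded-energy sequence each piece is individually bounded; boundedness of $\int_\Omega f_{bs}(u_i)$ and $\int_\Omega f_{bn}(\Q_i)$ yields $L^4$ control of $u_i$ and $\Q_i$ (hence of $\Vert u_i^2\Vert_{L^2}$ and $\Vert \Q_i^2\Vert_{L^2}$), the coupling integral itself is bounded, and then the elementary inequality $f_1^2\leqslant 2(f_1+f_2)^2+2f_2^2$ together with Cauchy--Schwarz, $\int_\Omega |\Q_i|^2u_i^2\,\mathrm{d}\x\leqslant \Vert\Q_i^2\Vert_{L^2}\Vert u_i^2\Vert_{L^2}$, bounds $\Vert D^2u_i\Vert_{L^2}$ with \emph{no smallness condition on any constant}; an interpolation inequality then controls $\nabla u_i$ and gives the $W^{2,2}$ bound. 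This soft argument buys parameter-robustness for free. Your absorption scheme does in fact close, but only after one observation you did not make explicit: in the Young-type bound $|\mathbf{a}+\mathbf{b}|^2\geqslant(1-\epsilon)|\mathbf{a}|^2-\left(\tfrac{1}{\epsilon}-1\right)|\mathbf{b}|^2$, the constant $C_\epsilon\propto\tfrac{1}{\epsilon}-1$ tends to $0$ as $\epsilon\uparrow 1$, so you can retain a small positive coefficient on $\Vert D^2u\Vert_{L^2}^2$ while making $C_\epsilon/2$ smaller than the fixed quartic bulk coefficients $C/4$ and $c/4$; without this choice the absorption may produce negative quartic coefficients and the bound is vacuous. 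Likewise, your worry about temperature-uniformity of the constants is moot: the proposition is at fixed parameters, so $s_+$ is a fixed positive constant (and in 3D it is bounded below by $B/(3C)$ on the entire regime $A<\tfrac{B^2}{27C}$ where the coupling involves $\Q/s_+$). Two smaller points: ``dominated convergence against uniform $L^4$ bounds'' is not quite the right tool for the bulk integrals --- use the strong subcritical $L^p$ convergence from Rellich--Kondrachov, which you already invoke; and where you prove weak lower semicontinuity of the coupling term by the weak-strong splitting (which is correct and self-contained), the paper simply cites prior work for that step, so your treatment is the more complete of the two there.
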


\begin{proof}
The admissible space $W^{1,2}_{\Omega,\S_0}\times W^{2,2}_\Omega$ is non-empty. The existence of a global minimiser of \eqref{energy} under Dirichlet boundary conditions for both $\Q$ and $u$ has been proven in \cite{xia2023variational}. We prove that the existence of a global minimiser also holds with weak anchoring for $\Q$ and natural boundary condition for $u$. The bulk energy $f_{bn}(\Q)$ is a fourth-order polynomial of $\Q$, and the fourth-order term is positive because $C>0$. Hence, there exists a positive $M$ (that depends on $A$, $B$, $C$) such that $f_{bn}(\Q)\geqslant \frac{C}{8}|\Q|^4$ for $|\Q|^2\geqslant M$, so that 
\begin{equation}
  f_{bn}(\Q) \geqslant \begin{cases}
 \frac{C}{8}|\Q|^4\geqslant \frac{MC}{8}|\Q|^2,|\Q|^2\geqslant M,\\         \min_{|\Q|^2\leqslant M} f_{bn}(\Q) =constant, |\Q|^2\leqslant M.
 \end{cases}
\end{equation}
Thus, there exist two positive constants, $C_1(A,B,C)>0,C_2(A,B,C)>0$, such that
\begin{equation}
 \int_\Omega f_{bn}(\Q)\mathrm{d}\x \geqslant C_1(A,B,C)\Vert \Q \Vert^2_{L^2_{\Omega,\S_0}}-C_2(A,B,C), 
\end{equation}
and
\begin{equation}
\begin{aligned}
 &\int_\Omega \frac{K}{2} \left|\nabla \Q \right|^2+f_{bn}(\Q)\mathrm{d}\x+\omega\int_{\partial \Omega}\Vert \Q-\Q_{bc}\Vert^2\mathrm{d}S \\
 &\geqslant \min\left(\frac{K}{2},C_1(A,B,C)\right)\Vert \Q\Vert_{W^{1,2}_{\Omega,\S_0}}^2-C_2(A,B,C),
\end{aligned}
\end{equation}
which means \eqref{energy} is coercive with respect to $\Q$. Now we prove the coerciveness estimate in $u$, i.e. if  $E(\Q,u)$ is bounded, then $u$ is also bounded in $W^{2,2}_\Omega$. The bulk energy $f_{bs}(u)$ is a fourth order polynomial of $u$ with $c>0$, and $\int_\Omega f_{bs}(u) \  \mathrm{d}\x$ is bounded, so $\Vert u \Vert_{L^2_\Omega}$, $\Vert u^2 \Vert_{L^2_\Omega}$ are also bounded. Similarly, $\Vert \Q \Vert_{L^2_{\Omega,\S_0}}$, $\Vert \Q^2 \Vert_{L^2_{\Omega,\S_0}}$ are shown to be bounded. 

When $A\geqslant \frac{B^2}{24C}$, the boundedness of $\Vert D^2 u\Vert^2_{L^2_\Omega}$ can be directly obtained from \eqref{eq: coupling term}. For $A<\frac{B^2}{24C}$, $\Vert D^2 u\Vert^2_{L^2_\Omega}$ is bounded by 
$$
\int_\Omega \left|D^2u\right|^2 \mathrm{d}\x\leqslant \int_\Omega 2\left|D^2u+q^2\left(\frac{\Q}{s_+}+\frac{\mathbf{I}_3}{3}\right)u\right|^2\mathrm{d}\x  +2\int_\Omega \left|q^2\left(\frac{\Q}{s_+}+\frac{\mathbf{I}_3}{3}\right)u\right|^2 \mathrm{d}\x.
$$
Given the boundedness of both $\Vert u \Vert^2_{L^2_\Omega}$ and $\Vert D^2 u\Vert_{L^2_\Omega}$ along with the inequality,
$
  \Vert u \Vert^2_{L^2_\Omega} +\Vert D^2 u\Vert_{L^2_\Omega}\geqslant C_3(\Omega) \Vert \nabla u \Vert^2_{L^2_\Omega}
$
from Theorem 5.19 of \cite{bedford2014calculus}, we have established the boundedness of $\Vert u \Vert_{W^{2,2}_\Omega}$ which proves the coerciveness estimate for $u$. The weak lower semi-continuity of the LdG energy and the surface energy is guaranteed in \cite{shi2024multistability} and the weak lower semi-continuity features of $f_{bs}$ and N-S coupling term are guaranteed in \cite{xia2023variational}. The existence of a global minimiser follows from the direct methods in the calculus of variations.
\end{proof}

For $A<\frac{B^2}{24C}$, the Euler-Lagrange equations of the free energy \eqref{energy} are given by 
\begin{equation}
\begin{aligned}
K\Delta \Q =& A \Q-B \left(\Q^2-\frac{tr(\Q^2)}{3}\mathbf{I}_3\right)+C tr(\Q^2)\Q\\
&+2B_0 q^2/s_+\cdot\left(u\cdot D^2u-\frac{tr(u\cdot D^2u)}{3}\mathbf{I}_3\right)+2 B_0 q^4\cdot \frac{\Q}{s_+^2}u^2,\\
2B_0\Delta^2 u =&- au-bu^2-cu^3-2B_0 D^2u:\left(q^2\left(\frac{\Q}{s_+}+\frac{\mathbf{I}_3}{3}\right)\right)\\
&-2B_0 \nabla\cdot \left(\nabla \cdot\left(q^2\left(\frac{\Q}{s_+}+\frac{\mathbf{I}_3}{3}\right)u\right)\right)-2B_0 \cdot \left|q^2\left(\frac{\Q}{s_+}+\frac{\mathbf{I}_3}{3}\right)\right|^2 u.
\end{aligned}
\label{E-L equation}
\end{equation}
where $\Delta^2u=\left(\frac{\partial^2}{\partial x^2_1}+\frac{\partial^2}{\partial x^2_2}+\frac{\partial^2}{\partial x^2_3}\right)^2 u$, and we prove that the weak solutions of \eqref{E-L equation}, $\bar{\Q} \in W_{\Omega,\S_0}^{1,2}$,$\bar{u}\in W_{\Omega}^{2,2}$, are in fact, classical solutions of \eqref{E-L equation}.

\begin{proposition}\label{regularity}
 Let $\Omega$ be a bounded, connected open set in $\mathbb{R}^3$, $\partial \Omega$ is $C^{4,1/2}$ continuous, and $K,B_0\neq 0$, then the weak solutions $\bar{\Q} \in W_{\Omega,\S_0}^{1,2},\bar{u}\in W_{\Omega}^{2,2}$ of \eqref{E-L equation} are classical solutions of \eqref{E-L equation}, i.e. $\bar{\Q} \in C_{\Omega,\S_0}^2$ and $\bar{u} \in C_\Omega^4$. 
\end{proposition}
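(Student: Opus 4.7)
The proof is by a standard elliptic bootstrap on the coupled Euler--Lagrange system \eqref{E-L equation}, alternating between the second-order equation for $\bar{\Q}$ and the fourth-order biharmonic equation for $\bar u$. The main input is the Sobolev embedding chain in dimension three: $W^{1,2}\hookrightarrow L^6$, and $W^{2,2}\hookrightarrow C^{0,1/2}(\bar\Omega)\cap W^{1,6}\hookrightarrow L^\infty$. Thus I would begin with $\bar{\Q}\in L^6$, $\bar u\in L^\infty\cap W^{1,6}$, and $D^2\bar u\in L^2$.

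\textbf{First pass.} The right-hand side of the $\Q$-equation in \eqref{E-L equation} consists of polynomial terms in $\bar{\Q}$ together with $\bar u\,D^2\bar u$ and $\bar{\Q}\,\bar u^2$; with the preceding embeddings each summand lies in $L^2(\Omega)$. Calderón--Zygmund estimates for each of the five scalar Poisson equations then upgrade $\bar{\Q}$ to $W^{2,2}$, hence to $L^\infty\cap W^{1,6}$ in 3D. I would then turn to the biharmonic equation: the divergence-form term $\nabla\cdot\nabla\cdot\bigl((\bar{\Q}/s_++\mathbf{I}/3)\bar u\bigr)$ distributes, at this improved level of regularity, as $\bar u\,D^2\bar{\Q}+2\,D\bar{\Q}\cdot D\bar u+\bar{\Q}\,D^2\bar u$, each summand now in $L^2$; the polynomial-in-$u$ terms and the $|\bar{\Q}|^2\bar u$ contribution are even in $L^\infty$. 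Standard $L^2$ regularity for $\Delta^2$ yields $\bar u\in W^{4,2}\hookrightarrow C^{2,1/2}(\bar\Omega)$, and in particular $D^2\bar u\in L^\infty$.

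\textbf{Iteration and Schauder finish.} With $\bar{\Q}\in L^\infty$ and $D^2\bar u\in L^\infty$, both right-hand sides now lie in $L^p$ for every finite $p$. Alternating the Calderón--Zygmund estimate for $\Delta$ and the analogous $L^p$ estimate for $\Delta^2$ gives $\bar{\Q}\in W^{2,p}$ and $\bar u\in W^{4,p}$ for all $p<\infty$, hence $\bar{\Q}\in C^{1,\alpha}(\bar\Omega)$ and $\bar u\in C^{3,\alpha}(\bar\Omega)$ by Morrey. One more pass places every coefficient in \eqref{E-L equation} in $C^{0,\alpha}$; Schauder estimates for the Laplacian then give $\bar{\Q}\in C^{2,\alpha}$, and Schauder estimates for the biharmonic operator---exactly where the hypothesis $\partial\Omega\in C^{4,1/2}$ is used---give $\bar u\in C^{4,\alpha}$. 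Thus the Euler--Lagrange equations are satisfied in the classical sense.

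\textbf{Main obstacle.} The one genuinely non-trivial feature is that the system has mixed order with cross-coupling in the top derivatives: the second-order equation for $\Q$ contains $u\,D^2u$, and the fourth-order equation for $u$ contains $\nabla\cdot\nabla\cdot(\Q u)$, i.e.\ second derivatives of $\Q$. A bootstrap applied to either equation in isolation cannot even start, because neither field is initially regular enough to make sense, classically, of the top-order coefficients arriving from the other. The alternating scheme closes because on each side the principal elliptic operator $\Delta$ or $\Delta^2$ controls the highest derivative, while the cross-coupling enters only as a polynomial multiplier of quantities already bounded at the current step, so the gain on one field is precisely the input needed to upgrade the other.
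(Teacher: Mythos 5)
Your proposal is correct and follows essentially the same route as the paper: the identical starting embeddings ($\bar{\Q}\in L^6$, $\bar u\in C^{0,1/2}$ in 3D), the same alternating bootstrap (first $\bar{\Q}\in W^{2,2}$ from the second-order equation, then $\bar u\in W^{4,2}$ from the biharmonic equation), and a Schauder finish. The only difference is bookkeeping---you spell out the intermediate $L^p$/Morrey steps that the paper compresses into ``one can continue to alternately increase the regularity''---which, if anything, makes the iteration more explicit than the paper's version.
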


\begin{proof}
Assume that $\bar{\Q} \in W_{\Omega,\S_0}^{1,2},\bar{u}\in W_{\Omega}^{2,2}$ are weak solutions of the following Euler-Lagrange equation,
\begin{equation}
\begin{aligned}
K\Delta \bar{\Q} =& \underbrace{A\bar{\Q}-B\left(\bar{\Q}^2-\frac{tr(\bar{\Q}^2)}{3}\mathbf{I}_3\right)+C tr(\bar{\Q}^2)\bar{\Q}}_{f_1(\bar{\Q})}\\
&+\underbrace{2 B_0  q^2/s_+\cdot \left(\bar{u}\cdot D^2\bar{u}-\frac{tr(\bar{u}\cdot D^2\bar{u})}{3}\mathbf{I}_3\right)}_{f_2(\bar{u})}+\underbrace{2 B_0 q^4 \cdot \frac{\bar{\Q}}{s_+^2}  \bar{u}^2}_{f_3(\bar{\Q},\bar{u})},\\
\Delta^2 \bar{u} =& \underbrace{- \frac{a}{2B_0}\bar{u}-\frac{b}{2B_0}\bar{u}^2-\frac{c}{2B_0}\bar{u}^3}_{f_4(\bar{u})}-\underbrace{D^2\bar{u}:\left(q^2\left(\frac{\bar{\Q}}{s_+}+\frac{\mathbf{I}_3}{3}\right)\right)}_{f_5(\bar{\Q},\bar{u})}\\
&-\underbrace{\nabla\cdot \left(\nabla \cdot\left(q^2\left(\frac{\bar{\Q}}{s_+}+\frac{\mathbf{I}_3}{3}\right)\bar{u}\right)\right)}_{f_6(\bar{\Q},\bar{u})}-\underbrace{\left|q^2\left(\frac{\bar{\Q}}{s_+}+\frac{\mathbf{I}_3}{3}\right)\right|^2 \bar{u}}_{f_7(\bar{\Q},\bar{u})}.
\end{aligned}
\label{E-L}
\end{equation}
From the density of $C^\infty_\Omega$ in $W^{1,2}_\Omega$ and $W^{2,2}_\Omega$ \cite{adams2003sobolev}, we can assume that the boundary data (or trace) of $\bar{u}$ and $\bar{Q}$ coincide with functions in $C^\infty_\Omega$. 

Recall that we are working in 3D case. By using the Sobolev embedding theorem in the 3D case \cite{adams2003sobolev}, we have
$
u\in W_{\Omega}^{2,2}\hookrightarrow C^{0,\frac{1}{2}}_\Omega, \Q\in W_{\Omega,\S_0}^{1,2}\hookrightarrow L^6_{\Omega,\S_0},
$
and then
$
f_1(\bar{\Q}), f_2(\bar{u}), f_3(\bar{\Q},\bar{u})\in L^2_{\Omega,\S_0}.
$
The right-hand side of the first partial differential equation is in $L^2_{\Omega,\S_0}$, and elliptic regularity yields $\Q \in W_{\Omega,\S_0}^{2,2}$, which is allowed by the regularity of boundary data and that of the domain \cite{evans2022partial}. Hence, we have
$
f_4(\bar{u})\in C^{0,\frac{1}{2}}_\Omega \subset L^2_\Omega, f_5(\bar{\Q},\bar{u}), f_6(\bar{\Q},\bar{u}), f_7(\bar{\Q},\bar{u})\in L^2_\Omega.
$
Then the right-hand side of the second partial differential equation in \eqref{E-L} is in $L^2_{\Omega,\S_0}$, and elliptic regularity yields $u \in W_{\Omega}^{4,2}$. Then, the right-hand side of the first equation of \eqref{E-L} belongs to $W^{2,2}_{\Omega,\S_0} \hookrightarrow C^{0,1/2}_{\Omega,\S_0}$, and the Schauder estimate \cite{landis1997second} gives $\Q \in C^{2,1/2}_{\Omega,\S_0}$. One can continue to alternately increase the regularity of $\bar{\Q}$ and $\bar{u}$ to obtain the full regularity.
\end{proof}


In the subsequent discussion, we will focus on the 2D (two-dimensional) case to facilitate comparisons with the experimental observations of smectic phases on square domains  \cite{cortes2016colloidal} and with the numerical results for nematic phases on 2D domains  \cite{yin2020construction,canevari2017order}. The results in Sections \ref{phase transition}, \ref{sec:far}, \ref{sec:near}, can be generalized to 3D cases, by employing the same methodology. In \cite{golovaty2015dimension}, the authors prove that in the thin film limit or for approximately 2D scenarios, and for certain choices of the surface energies that enforce tangent boundary conditions, the LdG energy minimisers are $z$-invariant, have a fixed eigenvector,  $\Vec{z}$ (the unit-vector in the $z$-direction) with associated fixed negative eigenvalue. This automatically reduces the number of degrees of freedom from $5$ to $2$, see below: 
\begin{equation}
\Q =
\left(
\begin{tabular}{cc}
	$\Q_{2D}+\frac{q_3}{6}\I_2$  & 0 \\
	 0  & -$\frac{q_3}{3} $
\end{tabular}
\right), \Q_{2D}=\left(
\begin{tabular}{cc}
	$q_1$  & $q_2$ \\
	 $q_2$  &  $-q_1$
\end{tabular}
\right),
\label{relationship between 2D and 3D tensor}
\end{equation}
where the constant, $q_3$, depends on the phenomenological parameters in the LdG energy and the anchoring coefficients in the surface energies. The symmetric, traceless $2\times 2$ matrix, $\Q_{2D}$ is often referred to as the rLdG order parameter \cite{han2020reduced}. 
Consequently, the LdG energy reduces to 
\begin{equation}
\begin{aligned}
E_{2D}(\Q_{2D},u)=\int_{\Omega_{2D}} &f_{bs}(u)+f_{int,2D}(\Q_{2D},u)\\
&+\frac{K}{2}\left| \nabla \Q_{2D} \right|^2+\underbrace{\frac{A_{2D}}{2}\text{tr}(\Q_{2D}^2)+\frac{C}{4}(\text{tr}(\Q_{2D}^2))^2}_{f_{bn,2D}(\Q_{2D})} \text{ } \mathrm{d}\x,
\end{aligned}
\label{energy_2D}
\end{equation}
where $\Omega_{2D} \subset \mathbb{R}^2$ (the 2D cross-section of the 3D domain in the thin-film limit), $A_{2D}=A-\frac{q_3B}{3}+\frac{q_3^2C}{6}$. When $A_{2D}<0$, the minimizer of $f_{bn,2D}$ is $\Q_{2D}=s_{+,2D}(\n_{2D} \times \n_{2D}-\mathbf{I}_2/2)$, where $\n_{2D}$ is an arbitrary 2D unit vector and
\begin{equation}
s_{+,2D}=\sqrt{-2A_{2D}/C},
\label{eq: 2D s}
\end{equation}
and thus, the 2D coupling energy density, $f_{int,2D}$, is defined to be:
\begin{equation}
f_{int,2D}(\Q_{2D},u)=
\begin{cases}
B_0 \left|D^2u+q^2\left(\frac{\Q_{2D}}{s_{+,2D}}+\frac{\mathbf{I}_2}{2}\right)u\right|^2, A_{2D}<0,\\
B_0\left|D^2u\right|^2, A_{2D}\geqslant 0.
\end{cases}
\end{equation}
For brevity, we omit the subscript, 2D, in $E_{2D}$, $f_{int,2D}$ $\Omega_{2D}$, $\Q_{2D}$, $s_{+,2D}$, $A_{2D}$. All subsequent results are based on the functional in \eqref{energy_2D}, also known as the rLdG energy in \cite{han2020reduced}.

\section{Thermotropic phase transition}\label{phase transition}
We consider the I-N-S phase transition with periodic boundary conditions. Consider the one-dimensional domain $\Omega=[0,h]$ and assume that the rLdG order parameter, $\Q$, is of the form
\begin{equation}
    \Q=\begin{pmatrix}
        Q & 0 \\ 
        0 & -Q 
    \end{pmatrix}.
\end{equation}
This corresponds to $\n_{2D} = (1,0)$ in the definition of $\Q_{2D}$ above so that there is only one degree of freedom: the scalar order parameter, $Q$, that measures the degree of ordering about the director. 
When $A<0$, the free energy \eqref{energy_2D} simplifies to
\begin{equation}
E_{1D}(Q,u)
=\int_0^h f_{bs}(u) + B_0\left[u_{xx}+q^2\left(\frac{Q}{\sqrt{-2A/C}} +\frac{1}{2}\right)u\right]^2+KQ_x^2+A Q^2+C Q^4 \text{ } \mathrm{d}x,
\label{energy:simple}
\end{equation}
and for $A\geqslant0$,
\begin{equation}
    E_{1D}(Q,u)
=\int_0^h f_{bs}(u) + B_0u_{xx}^2+KQ_x^2+A Q^2+C Q^4 \text{ } \mathrm{d}x.
\label{energy:simple A>0}
\end{equation}
The two temperature-dependent parameters are $A=\alpha_1(T-T_1^*)$, and $a=\alpha_2(T-T_2^*)$, where $T_2^*<T_1^*$. It is known that the isotropic phase loses stability for $T<T_1^*$, and we show that the nematic phase (with $u=0$) loses stability at $T<T_2^*$, and the smectic phase (with non-zero $u$) is the energy minimiser for $a<0$.  Hence, $T_1^*$ and $T_2^*$ are the critical temperatures for the I-N and N-S phase transitions respectively, with $T_2^* < T_1^*$ \cite{de1993physics}. 

The admissible spaces are 
\begin{equation}
\begin{cases}
    Q \in V_Q=\{\Q \in W_{\Omega}^{1,2}, Q(0)=Q(h), D_x Q(0)=D_x Q(h)\},\\
     u \in V_u=\{u\in W_{\Omega}^{2,2}, u(0)=u(h), D_x u(0)=D_x u(h), D_{xx} u(0)=D_{xx} u(h)\},
    \end{cases}
    \label{space for periodic u and Q}
\end{equation}
and the E-L equations 
for $A<0$ are
\begin{equation}
\begin{cases}
2KQ_{xx}=2AQ+4CQ^3+\frac{2B_0q^2uu_{xx}}{\sqrt{-2A/C}}+\frac{2B_0q^4}{\sqrt{-2A/C}}\left(\frac{Q}{\sqrt{-2A/C}}+\frac{1}{2}\right)u^2,\\
-2B_0u_{xxxx}=au+cu^3+4B_0q^2\left(\frac{Q}{\sqrt{-2A/C}}+\frac{1}{2}\right)u_{xx}+2B_0q^2\frac{Q_{xx}u}{\sqrt{-2A/C}}\\
\qquad \qquad \qquad+4B_0q^2\frac{Q_x u_x}{\sqrt{-2A/C}}+2B_0q^4\left(\frac{Q}{\sqrt{-2A/C}}+\frac{1}{2}\right)^2u.
\end{cases}
\label{E-L one D}
\end{equation}

\begin{proposition}\label{nematic loses stability}
   Let $c, B_0, K, C$ be positive constants, and let $q=\frac{2\pi n_0}{h}$ for a fixed positive integer $n_0$, where $n_0=1,2,3,\cdots$. As temperature decreases, the energy functional \eqref{energy:simple} exhibits second-order I-N phase transition at $T=T_1^*$, and nematic phase is stable for $T_2^*\leqslant T < T_1^*$, but loses stability when $T<T_2^*$.
\end{proposition}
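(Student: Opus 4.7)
The plan is to identify two candidate critical points — the isotropic state $(Q,u) = (0,0)$ and the nematic state $(Q_n, 0)$ with $Q_n = \sqrt{-A/(2C)} = s_+/2$ (which exists only for $A < 0$) — and to track which one is stable in each of the three temperature regimes $T \geqslant T_1^*$, $T_2^* \leqslant T < T_1^*$, and $T < T_2^*$. Both are critical points: for $(Q_n, 0)$, the identity $Q_n/s_+ + 1/2 = 1$ collapses the bracket in the coupling term to $u_{xx} + q^2 u$ (which vanishes at $u \equiv 0$), and the $Q$-equation reduces to the algebraic identity $2AQ_n + 4CQ_n^3 = 0$. For the I-N claim, when $T \geqslant T_1^*$ one has $A \geqslant 0$ and (since $T_2^* < T_1^*$) $a > 0$; every integrand in energy form \eqref{energy:simple A>0} is then non-negative, so $(0,0)$ is the unique global minimum. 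When $T < T_1^*$, evaluating \eqref{energy:simple} at $(Q_n, 0)$ kills the bulk-smectic, coupling, and gradient terms and leaves $E_{\text{nem}} = h(AQ_n^2 + CQ_n^4) = -hA^2/(4C) < 0 = E_{\text{iso}}$. Since $Q_n(T) = \sqrt{\alpha_1(T_1^* - T)/(2C)}$ tends continuously to $0$ as $T \uparrow T_1^*$, the transition at $T_1^*$ is second-order.

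Next, for the stability of the nematic state I expand $E_{1D}(Q_n + \phi, \psi)$ to quadratic order. The cross-terms drop out because $u \equiv 0$ at the base point and because $2AQ_n + 4CQ_n^3 = 0$; for instance, the coupling integrand becomes $B_0[\psi_{xx} + q^2 \psi + q^2 \phi\psi/s_+]^2$, whose quadratic part is simply $B_0(\psi_{xx} + q^2 \psi)^2$ since $\phi \psi \cdot \psi_{xx}$ is cubic. I obtain the decoupled second variation
\begin{equation*}
\delta^2 E = \int_0^h \left\{ K \phi_x^2 - 2A \phi^2 + \tfrac{a}{2} \psi^2 + B_0(\psi_{xx} + q^2 \psi)^2 \right\} dx.
\end{equation*}
The $\phi$-part is strictly positive on non-zero $\phi$ since $-2A > 0$ and $K > 0$. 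For the $\psi$-part, I expand $\psi(x) = \sum_k c_k e^{2\pi i k x / h}$ (admissible under the periodic conditions of $V_u$), use the integration by parts $2B_0 q^2 \int \psi \psi_{xx}\, dx = -2B_0 q^2 \int \psi_x^2\, dx$, and apply Parseval to obtain
\begin{equation*}
\int_0^h \left[ \tfrac{a}{2} \psi^2 + B_0 (\psi_{xx} + q^2 \psi)^2 \right] dx = h \sum_k \left[ \tfrac{a}{2} + B_0 (\omega_k^2 - q^2)^2 \right] |c_k|^2, \qquad \omega_k = \tfrac{2\pi k}{h}.
\end{equation*}
The hypothesis $q = 2\pi n_0 / h$ is precisely what makes $\omega_{\pm n_0}^2 = q^2$, so the $k = \pm n_0$ modes carry only the coefficient $a/2$, while every other mode carries a strictly positive additional term $B_0(\omega_k^2 - q^2)^2$. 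Hence $\delta^2 E \geqslant 0$ whenever $a \geqslant 0$, giving stability for $T_2^* \leqslant T < T_1^*$. For $T < T_2^*$, where $a < 0$, the resonant test direction $(\phi, \psi) = (0, \epsilon \cos(qx))$ annihilates the coupling term and produces $\delta^2 E = h \epsilon^2 a / 4 < 0$, exhibiting a direction of strict energy decrease.

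The step needing most care is the cross-term bookkeeping in the expansion of $B_0[\psi_{xx} + q^2(1 + \phi/s_+)\psi]^2$: one must verify that all mixed terms such as $\phi \psi \psi_{xx}$ and $\phi^2 \psi^2$ are relegated to cubic or higher order in $(\phi, \psi)$, so that they do not enter the second variation. Once this accounting is clean, the conclusion reduces to the transparent Fourier resonance argument above, which exploits the exact cancellation $\omega_{n_0}^2 - q^2 = 0$ guaranteed by the hypothesis on $q$.
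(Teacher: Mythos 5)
Your proposal follows essentially the same route as the paper: the same two critical points, the same pointwise non-negativity argument for $T \geqslant T_1^*$, and the same second-variation-plus-Fourier analysis at the nematic state, including the key observation that $q = 2\pi n_0/h$ makes the resonant modes $\sin(qx)$, $\cos(qx)$ carry only the bare coefficient $a$. (The paper's $\delta^2 E_{1D}$ in \eqref{second variation} is exactly twice your quadratic form --- a harmless convention difference --- and its destabilising directions for $a<0$ are the same resonant modes; the paper additionally computes the full spectrum and the Morse index \eqref{eq: Morse index of nematic}, which you do not need for the bare instability claim.) Your criticality check, cross-term bookkeeping, Parseval computation, and the unstable direction $\psi = \epsilon\cos(qx)$ with quadratic energy $ah\epsilon^2/4 < 0$ for $a<0$ are all correct.

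There is, however, one step that does not hold as written: the inference ``$\delta^2 E \geqslant 0$ whenever $a \geqslant 0$, giving stability for $T_2^* \leqslant T < T_1^*$.'' At the endpoint $T = T_2^*$ (i.e.\ $a = 0$) your second variation is degenerate --- it vanishes identically on the two-dimensional resonant kernel spanned by $\sin(qx)$ and $\cos(qx)$ --- and a merely nonnegative second variation with a nontrivial kernel is necessary but not sufficient for stability, so the closed left endpoint of the interval claimed in the proposition is not covered by your argument. The paper sidesteps this entirely by proving a stronger statement on the whole interval: for $a \geqslant 0$, the bulk smectic term $f_{bs}(u)$ and the coupling term are pointwise nonnegative, whence
\begin{equation*}
E_{1D}(Q,u) \;\geqslant\; \int_0^h \left( A Q^2 + C Q^4 \right) \mathrm{d}x \;\geqslant\; -\frac{hA^2}{4C} \;=\; E_{1D}(Q_N, u_N),
\end{equation*}
i.e.\ the nematic state is the \emph{global} minimiser for $T_2^* \leqslant T < T_1^*$, with $a = 0$ included at no extra cost. (This global inequality also supersedes your energy comparison with the isotropic state, which by itself only shows the nematic beats one competitor.) You can patch your version either by quoting this one-line inequality --- which you already used in spirit for $T \geqslant T_1^*$ --- or by observing that at $a = 0$ the quartic term $\tfrac{c}{4}\int_0^h \psi^4\,\mathrm{d}x$ strictly penalises the kernel directions, though the global argument is cleaner. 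With that endpoint repaired, your proof establishes the proposition by the same method as the paper.
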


\begin{proof}
The isotropic phase $(Q_{I}\equiv 0, u_{I}\equiv 0)$ is always a solution of the E-L equation of \eqref{energy:simple} for $A< 0$ and \eqref{energy:simple A>0} for $A\geqslant 0$, and the nematic phase $(Q_N\equiv \sqrt{-A/2C}, u_N\equiv 0)$ is the solution of \eqref{E-L one D} when $A< 0$. 

For $T\geqslant T_1^*$, we have $a=\alpha_2(T-T_2^*)\geqslant 0$, $A=\alpha_1(T-T_1^*)\geqslant 0$ i.e. $f_{bs}(u)\geqslant 0, AQ^2+CQ^4\geqslant 0$. Hence, for any $Q,u$ in admissible space, the isotropic phase $(Q_{I}\equiv 0, u_{I}\equiv 0)$ is the global minimiser for $T\geqslant T_1^*$, i.e.
\begin{equation}
    E_{1D}(Q,u)=\int_0^h f_{bs}(u) + B_0u_{xx}^2+KQ_x^2+A Q^2+C Q^4 \text{ } \mathrm{d}x \geqslant 0 = E_{1D}(Q_{I},u_{I}).
\end{equation}

For $T_1^*> T \geqslant T_2^*$, we have $a=\alpha_2(T-T_2^*)\geqslant 0$, i.e. $f_{bs}(u)\geqslant 0$. Hence, for any $Q,u$ in admissible space, nematic phase $(Q_N\equiv \sqrt{-A/2C}, u_N\equiv 0)$ is the global minimiser for $T_1^*> T \geqslant T_2^*$. 


To investigate the stability of nematic phase near $T=T^*_2$, we calculate the second variation of \eqref{energy:simple} at $(Q_N\equiv \sqrt{-A/2C}, u_N\equiv 0)$ for a periodic perturbation, $(\eta_1,\eta_2)$, 
\begin{equation}
\delta^2E_{1D}(\eta_1,\eta_2)=\int_0^h \left(a(T)\cdot\eta_2^2+2B_0\left(\eta_{2xx}+q^2\eta_2\right)^2 + 2K (\eta_{1x})^2 -4A \eta_1^2 \right)\mathrm{d}x.
\label{second variation}
\end{equation}
The stability of the nematic phase is measured by the minimum eigenvalue of $\delta^2E_{1D}$, 
\begin{equation}
\mu_T=\inf_{\eta_1\in V_Q,\eta_2 \in V_u}\frac{\delta^2E_{1D}(\eta_1,\eta_2)}{\int_0^h \eta_1^2 + \eta_2^2 \mathrm{d}x}.
\label{eigenvalue}
\end{equation}
If $\mu_T<0$, the nematic phase is unstable. If $\mu_T>0$, the nematic phase is stable.

For $T<T^*_1$, i.e., $-4A>0$, any perturbation with a non-zero $\eta_1$ is always a stable direction. Thus, we only consider the perturbation $(0,\eta_2)$. The Fourier expansion of the function, $\eta_2$, in $\Omega=[0,h]$ is given by
\begin{equation}
\eta_2=w_0/2+\sum_{n=1}^\infty w_n \cos(\frac{2\pi n x}{h}) + v_n \sin(\frac{2\pi n x}{h}).
\label{fourier expansion}
\end{equation}
By substituting \eqref{fourier expansion} into \eqref{second variation}, we have
$$
\delta^2 E_{1D}(0,\eta_2)=h/2\cdot\left(\frac{a+2B_0q^4}{2}\right)w_0^2+h/2\cdot\sum_{n=1}^\infty \left[2B_0\left(\frac{4\pi ^2 n^2}{h^2}-q^2\right)^2+a\right](w_n^2+v_n^2).
$$
$(0,\eta)$ is an eigenfunction of \eqref{second variation} if and only if 
\begin{equation}
    2a \eta + 4B_0 \eta_{xxxx}+8B_0q^2\eta_{xx}+4B_0q^4\eta=\lambda\eta.
    \label{KKT}
\end{equation}
One can verify that \eqref{KKT} is the first order optimal condition (or KKT condition \cite{boyd2004convex}) of \eqref{eigenvalue}. By substituting \eqref{fourier expansion} into \eqref{KKT}, we get that $\eta\equiv 1$, $\eta= \cos(\frac{2\pi n x}{h})$ and $\eta= \sin(\frac{2\pi n x}{h})$, $n=1,2,3\cdots$ are the eigenvectors of $\delta^2 E$ with eigenvalues $\mu=a+2B_0q^4$ and $a+2B_0\left(\frac{4\pi ^2 n^2}{h^2}-q^2\right)^2$, $n=1,2,3\cdots$, respectively. For $n_0\in\mathbb{Z^+}$ s.t. $\left(\frac{4\pi ^2 n_0^2}{h^2}-q^2\right)^2 = 0$,
then $\eta=\sin(\frac{2\pi n_0 x}{h}) = \sin\left( qx \right)$ and $\eta= \cos(\frac{2\pi n_0 x}{h}) = \cos\left( qx \right)$ are the eigenvectors corresponding to the minimum degenerate eigenvalue $\mu=a$.
For $T_1^* > T\geqslant T_2^*$, i.e., $a\geqslant0$, the second variation is always positive, i.e., the nematic phase is stable. For $T< T_2^*$, i.e., $a<0$, the eigenvector $\eta \equiv 1$ is an unstable eigendirection if and only if the corresponding eigenvalue $a+2B_0q^4<0$ is negative, and the eigenvectors $\sin(\frac{2\pi n x}{h})$ and $\cos(\frac{2\pi n x}{h})$, $n=1,2,3\cdots$, are unstable eigendirections if and only if the corresponding eigenvalue $a+2B_0\left(\frac{4\pi ^2 n^2}{h^2}-q^2\right)^2$ is negative. Thus, the Morse index of the nematic phase, i.e., the number of eigenvectors corresponding to negative eigenvalues is 
\begin{equation}\label{eq: Morse index of nematic}
i_{nematics} = 
    2\times card(\mathbb{N}_{nematics})+m_0,
\end{equation}
where
\begin{equation}\label{eq: index constraint}
\mathbb{N}_{nematics} = \left\{n\in\mathbb{Z^+}:a+2B_0\left(\frac{4\pi ^2 n^2}{h^2}-q^2\right)^2<0\right\}
\end{equation}
and $card(\mathbb{N}_{nematics})$ is the cardinal number of $\mathbb{N}_{nematics}$. If $a+2B_0q^4\geqslant 0$, $m_0=0$; otherwise $m_0=1$, i.e. $\eta \equiv 1$ is an unstable eigendiretion.
As $a<0$ 
decreases, more positive integers satisfy the constraint in \eqref{eq: index constraint}, and the Morse index of the nematic phase, $i_{nematics}$, increases. 

\end{proof}

\begin{remark}
The energy functional in \eqref{energy:simple} exhibits a second-order I-N phase transition, so that the isotropic and nematic phases cannot coexist in the 2D $\Q$-tensor model \eqref{energy_2D}. The 2D isotropic phase, \(\Q_{iso,2D} = 0\), is not equivalent to the 3D isotropic phase, \(\Q_{iso,3D} = 0\) (see \eqref{relationship between 2D and 3D tensor}). A first-order I-N phase transition can be demonstrated using a similar method that employs the full 3D $\Q$-tensor, with five degrees of freedom, in \eqref{energy}.
\end{remark}

For example, in Figure \ref{Morse index of nematic}, we substitute the parameter values in the caption, into \eqref{eq: Morse index of nematic}, and get $\mathbb{N}_{nematic} = \{3,4,5\}$ and $m_0 = 0$, i.e., the Morse index of the nematic phase $i_{nematics} = 6$ with unstable eigendirections $\eta=\sin(nx),\cos(nx)$, $n\in\mathbb{N}_{nematics}$.

\begin{figure}[ht]
\centering
\includegraphics[width=.7\textwidth]{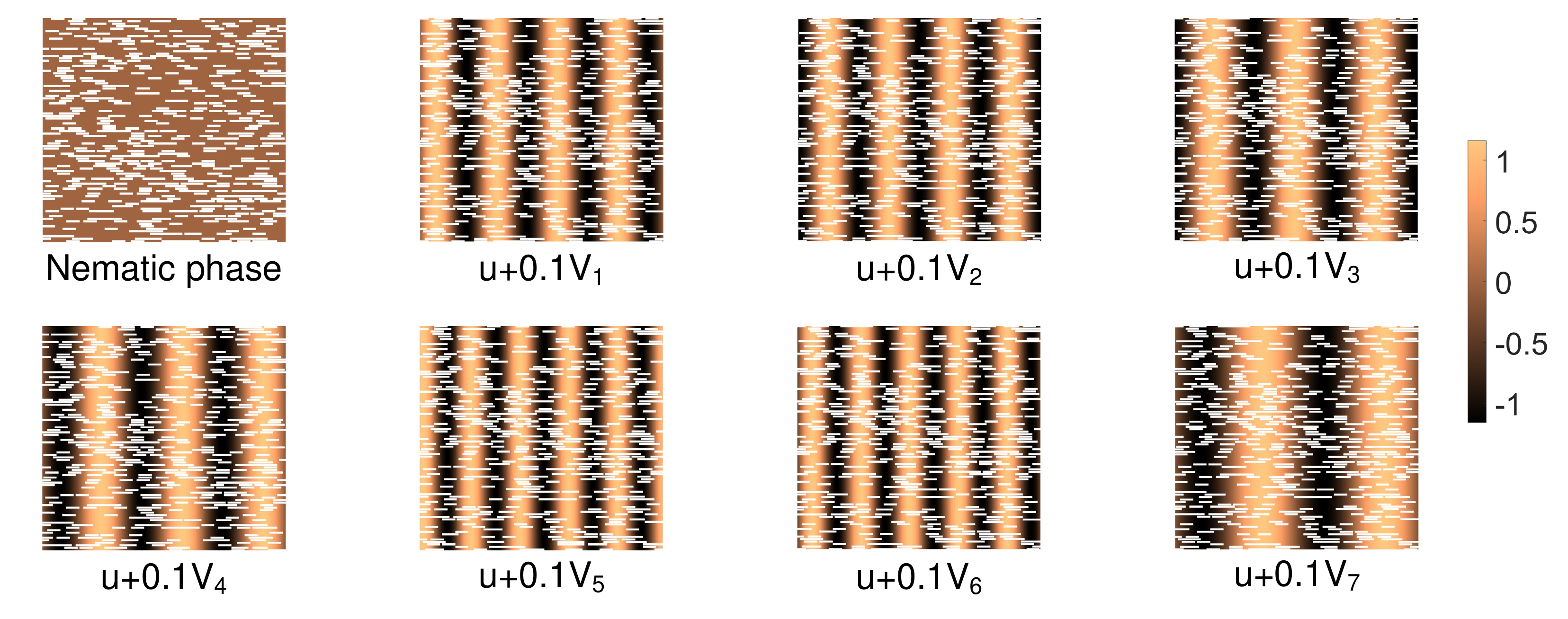}
\caption{The nematic critical point for $h=2\pi$, $q=4$,  $T=-30$, $T_1^*=0$, $T_2^*=-10$, $a=T-T_2^*=-20$, $A=T-T_1^*=-30$, $B_0=0.1$, $c=10, C=10$. $V_1$ to $V_6$ are the unstable eigendirections associated with $u$ and $V_7$ is a stable eigendirection. The pairs of unstable eigendirections $V_1$ and $V_2$, $V_3$ and $V_4$, $V_5$ and $V_6$ are the orthogonal linear combinations of $\sin(nx)$ and $\cos(nx)$ with $n =4,3,5$ respectively. The colour bar represents the modulation of the density, and we use the same visualization method in the following figures. The white lines define the nematic director.}
\label{Morse index of nematic}
\end{figure}

The aforementioned calculations show that the nematic phase loses stability as the temperature decreases. In the remainder of this section, we demonstrate that when the nematic phase loses stability, it bifurcates into a more stable smectic phase. To study this, we consider the E-L equation for $u$, 
\begin{equation}
    2B_0 u_{xxxx}+au+cu^3+4B_0q^2u_{xx}+2B_0q^4u=0,
    \label{E-L of u}
\end{equation}
i.e. fix $Q\equiv \frac{s_+}{2}$ in \eqref{E-L one D} for brevity, but the results also hold for variable $Q$. In the proof of Proposition \ref{nematic loses stability}, we note that the minimum eigenvalue of the nematic phase is degenerate, which presents technical difficulties in bifurcation theory 
\cite{chow2012methods}. To circumvent this issue, we construct the following working space:
\begin{equation}
    V=V_u \cap W^{1,2}_{0,\Omega},
\end{equation}
where $V_u$ is defined in \eqref{space for periodic u and Q}. This restricts $\eta=cos(qx)$ from serving as an eigenvector and then simplifies the minimum eigenvalue at the nematic phase.
\begin{proposition}\label{bifurcation proof}
    Given any positive $c, B_0$, and $q=\frac{2\pi n_0}{h}$, where $n_0$ is a natural number, a pitchfork bifurcation of \eqref{E-L of u} arises at $a=0$ or $T=T_2^*$, $u\equiv 0$ in $V$. More precisely,
there exists positive numbers $\epsilon,\delta$ and two smooth maps
\begin{equation}
    t\in (-\delta,\delta)\rightarrow a(t)\in(-\epsilon,\epsilon), t\in (-\delta,\delta)\rightarrow w_t \in V
\end{equation}
such that all the pairs $(a,u)\in R\times V$ satisfying 
$$
u \text{ } is \text{ } a \text{ } solution \text{ } to \text{ } \eqref{E-L of u}, |a|<\epsilon , \Vert u \Vert_{W^{2,2}_\Omega}\leqslant \epsilon
$$
are either
$$\text{ } nematic \text{ } phase: u\equiv0 \text{ } or\text{ } smectic \text{ } phases: u=\pm \left(t sin(qx)+ t^2 w_t\right).
$$
\end{proposition}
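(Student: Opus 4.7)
The plan is to recognise \eqref{E-L of u} as a $\mathbb{Z}_2$-equivariant bifurcation problem and apply the Crandall--Rabinowitz theorem. Define
$F(a,u) := 2B_0 u_{xxxx}+au+cu^3+4B_0 q^2 u_{xx}+2B_0 q^4 u$
as a $C^\infty$ map from $\mathbb{R}\times (V\cap W^{4,2}_\Omega)$ into $L^2(\Omega)$ (the cubic nonlinearity is smooth between these spaces since $W^{2,2}\hookrightarrow L^\infty$ in one dimension). Two structural features are immediate: $F(a,0)\equiv 0$ supplies the trivial nematic branch, and $F(a,-u)=-F(a,u)$ is the $\mathbb{Z}_2$-equivariance that will upgrade the generic transcritical bifurcation to a pitchfork.

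The linearisation is $L_0 := D_u F(0,0) = 2B_0 (\partial_x^2 + q^2)^2$. On the full periodic space $V_u$, the null space of this self-adjoint fourth-order operator is the two-dimensional $\mathrm{span}\{\sin(qx),\cos(qx)\}$; this is precisely the degeneracy highlighted in Proposition \ref{nematic loses stability}, and is the reason for passing to $V = V_u\cap W^{1,2}_{0,\Omega}$: the extra constraint $u(0)=u(h)=0$ eliminates $\cos(qx)$ (since $\cos(0)=1\neq 0$) while preserving $\sin(qx)$. Hence $\ker L_0 = \mathrm{span}\{\sin(qx)\}$ is one-dimensional in $V$. By self-adjointness, $L_0$ is Fredholm of index zero with range $\{\sin(qx)\}^\perp$, and the transversality condition $\partial_a D_u F(0,0)\sin(qx) = \sin(qx)\notin\mathrm{range}(L_0)$ is then automatic.

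The last step is a Lyapunov--Schmidt reduction. Write $u = t\sin(qx)+w$ with $w$ in the $L^2$-orthogonal complement of $\sin(qx)$ inside $V$; the implicit function theorem, using Fredholm invertibility of $L_0$ restricted to this complement, solves $(I-P)F(a,u)=0$ for $w=w(a,t)$ with $w(0,0)=0$, and projecting onto $\sin(qx)$ gives a scalar bifurcation equation $g(a,t)=0$. The $\mathbb{Z}_2$-symmetry forces $g(a,-t)=-g(a,t)$, so $g(a,t)=t\,\tilde g(a,t^2)$ for a smooth $\tilde g$ with $\tilde g(0,0)=0$ and $\partial_a\tilde g(0,0)\neq 0$ from the transversality above. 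A final application of the implicit function theorem to $\tilde g=0$ produces a smooth curve $a=a(t^2)$ and a smooth family $w_t$ with $w(a(t^2),t)=t^2 w_t$, which is exactly the announced form $u=\pm(t\sin(qx)+t^2 w_t)$. The one concrete calculation that remains is the leading-order expansion of $\tilde g$, i.e.\ computing $\langle c\sin^3(qx),\sin(qx)\rangle$, to pin down the sign of $a''(0)$ and verify that the bifurcating branch opens toward $a<0$ (i.e.\ $T<T_2^*$), consistent with the physical picture; I expect the main technical subtlety to be the functional-analytic bookkeeping of the Lyapunov--Schmidt reduction in the restricted space $V$, while everything else is a direct application of standard equivariant bifurcation theory.
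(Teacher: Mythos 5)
Your proposal is correct and follows essentially the paper's own route: both arguments break the $\{\sin(qx),\cos(qx)\}$ degeneracy by passing to $V=V_u\cap W^{1,2}_{0,\Omega}$, identify the one-dimensional kernel $\mathrm{span}\{\sin(qx)\}$ of $L_0=2B_0(D_{xx}+q^2)^2$, use symmetry of $L_0$ under the $L^2$ pairing to get Fredholm index zero, and arrive at the Crandall--Rabinowitz structure $u=\pm(t\sin(qx)+t^2w_t)$. You diverge at two points, both soundly. First, transversality: you deduce $\sin(qx)\notin\mathrm{range}(L_0)$ abstractly from $\mathrm{range}(L_0)=\ker(L_0)^{\perp}$, which is legitimate once symmetry and Fredholmness are in hand (closed range, $\mathrm{range}\subseteq\ker^{\perp}$, and matching codimension); the paper instead verifies the same fact by hand, solving $L_0w=\sin(qx)$ explicitly and showing the resonant solution $w=-x^2\sin(qx)/(16B_0q^2)+k_1\sin(qx)$ cannot meet the boundary conditions of $V$. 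The paper's computation is in fact redundant given its own self-adjointness argument, but it is elementary and independently checkable; your shortcut buys brevity at the cost of leaning on the range characterisation. Second, you re-derive the pitchfork via an explicit $\mathbb{Z}_2$-equivariant Lyapunov--Schmidt reduction, factoring $g(a,t)=t\,\tilde g(a,t^2)$, where the paper simply invokes Crandall--Rabinowitz's theorem; your version is more self-contained and, through the projection $\langle c\sin^3(qx),\sin(qx)\rangle=3ch/8>0$, additionally pins down that the branch opens into $a<0$ (i.e.\ $T<T_2^*$), a refinement the proposition does not state but which is consistent with Proposition \ref{nematic loses stability} and the numerics in Figure \ref{bifurcation}. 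One shared caveat: both you and the paper pass from ``symmetric'' to ``Fredholm of index $0$'' without comment; strictly, Fredholmness comes from ellipticity of $(D_{xx}+q^2)^2$ with the stated boundary conditions on the bounded interval (compact resolvent), and symmetry then forces the index to vanish --- worth a sentence if you write this up.
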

\begin{proof}
The proof follows the same paradigm as in Theorem 5.2 in \cite{lamy2014bifurcation} and Theorem 5.1 in \cite{canevari2017order}, and we address the necessary technical differences that arise because the study in \cite{lamy2014bifurcation} and  \cite{canevari2017order} focuses on a second-order partial differential equation, while our analysis involves a fourth-order partial differential equation.

To show that a pitchfork bifurcation arises at $a=0$, we apply the Crandall and
Rabinowitz bifurcation theorem \cite{crandall1971bifurcation} to the operator $\mathcal{F}:\mathbb{R}\times V \rightarrow W^{-2,2}_\Omega$ ($W^{-2,2}_\Omega$ is the dual space of $W^{2,2}_\Omega$) defined by
\begin{equation}
    \mathcal{F}(a,w):=2B_0D_{xxxx}w+aw+cw^3+4B_0q^2D_{xx}w+2B_0q^4w.
\end{equation}
We have to check four assumptions of Theorem 1.7 in \cite{crandall1971bifurcation}:

(a) $\mathcal{F}(a,0)=0$; (b) The partial derivatives $D_a\mathcal{F}, D_w\mathcal{F}, D_{aw}\mathcal{F}$ exist and are continuous; (c)$
    dim\left(\frac{W^{-2,2}(\Omega)}{Range(D_w\mathcal{F}(0,0))}\right)=dim\left(Kernel\left(D_w\mathcal{F}(0,0)\right)\right)=1$; (d) $ D_{aw}\mathcal{F}w_0\notin Range(D_{w}\mathcal{F}(0,0))$, where $w_0\in Kernel\left(D_w\mathcal{F}(0,0)\right)$. 

$\mathcal{F}(a,0)=0$ holds for all $a\in \mathbb{R}$. We have
\begin{equation}
\begin{cases}
    D_a\mathcal{F}(a,w)=w,\\
    D_w\mathcal{F}(a,w)=2B_0D_{xxxx}+a+3cw^2+4B_0q^2D_{xx}+2B_0q^4,\\
    D_{aw}\mathcal{F}(a,w)=1,
\end{cases}
\end{equation}
and they are continuous, since $D_w\mathcal{F}(a,w): V\rightarrow W^{-2,2}_\Omega$ is a bounded linear operator. For checking $\mathcal{F}$ satisfies assumption (c), we should calculate the kernel space of 
\begin{equation}
    D_w\mathcal{F}(0,0)=2B_0D_{xxxx}+4B_0q^2D_{xx}+2B_0q^4=2B_0(D_{xx}+q^2)(D_{xx}+q^2)
\end{equation}
in $V$, i.e. the solution space of the following differential equation:
\begin{equation}\label{eq: kernal space}
\begin{cases}
D_w\mathcal{F}(0,0)w=2B_0D_{xxxx}w+4B_0q^2D_{xx}w+2B_0q^4w=0,\\
w(0)=w(h)=0, D_{x}w(0)=D_{x}w(h), D_{xx}w(0)=D_{xx}w(h).
\end{cases}
\end{equation}
The general solution of the differential question in \eqref{eq: kernal space} without considering the boundary condition is 
\begin{equation}
    w=(k_1+k_2x)sin(qx)+(k_3+k_4x)cos(qx),k_i\in \mathbb{R},i=1,2,3,4.
\end{equation}
By taking the boundary condition into account, we have $w=k_1 sin(qx),k_1\in \mathbb{R}$, and
\begin{equation}
dim\left(Kernel\left(D_w\mathcal{F}(0,0)\right)\right)=dim\left(\{w=k_1 sin(qx),k_1\in \mathbb{R}\}\right)=1.
\end{equation}
For any $u_a,u_b\in V$, we have
\begin{equation}
\begin{aligned}
    \left<D_w\mathcal{F}(0,0) u_a,u_b\right>&=2B_0\int_0^h \left((D_{xx} +q^2)(D_{xx} +q^2)u_a\right)u_b \mathrm{d}x\\
    &= 2B_0\int_0^1 \left((D_{xx} +q^2)(D_{xx} +q^2)u_b\right)u_a \mathrm{d}x=\left< u_a,D_w\mathcal{F}(0,0)u_b\right>
    \end{aligned}
\end{equation}
by using the boundary conditions of $u_a$ and $u_b$, which means $D_w\mathcal{F}(0,0)$ is a self-adjoint operator, and hence it is a Fredholm operator of index $0$ \cite{ize1976bifurcation}. We have 
\begin{equation}
    dim\left(\frac{W^{-2,2}(\Omega)}{Range(D_w\mathcal{F}(0,0))}\right)=dim\left(Kernel\left(D_w\mathcal{F}(0,0)\right)\right)=1,
\end{equation}
which satisfies assumption (c). We also need to check the last assumption (d),
\begin{equation} 
    D_{aw}\mathcal{F}(a,w)sin(qx)=sin(qx)\notin range D_w\mathcal{F}(0,0),
\end{equation}
i.e. the following differential equation,
\begin{equation}
\begin{cases}
2B_0D_{xxxx}w+4B_0q^2D_{xx}w+2B_0q^4w=sin(qx),\\
w(0)=w(h)=0, D_x w(0)=D_x w(h)=0, D_{xx}w(0)=D_{xx}w(h),
\end{cases}
\label{assumption d}
\end{equation}
does not have a solution. One can check that the general solution of \eqref{assumption d} without considering the boundary condition  is 
\begin{equation}
    w=-\frac{x^2sin(qx)}{16B_0q^2}+k_1  sin(qx),k_1\in \mathbb{R},
\end{equation}
and it cannot satisfy the boundary conditions with any $k_1\in \mathbb{R}$, so that $sin(qx)\notin Range \left(D_w\mathcal{F}(0,0)\right)$. All the assumptions of Crandall and Rabinowitz's
theorem are satisfied, and the proposition follows directly from \cite{crandall1971bifurcation}.
\end{proof}

\begin{proposition}
    Given positive $c, B_0, K, C$, and $q=\frac{2\pi n_0}{h}$, where $n_0$ is a natural number, in \eqref{E-L one D}, 
    the nematic phase ($Q\equiv s_+/2,u\equiv 0$) loses stability in $V_Q\times V$ at the critical temperature $T=T_2^*$, via a symmetric pitchfork bifurcation.
\end{proposition}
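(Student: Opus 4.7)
The plan is to apply the Crandall--Rabinowitz bifurcation theorem to the coupled operator $\mathcal{F} = (\mathcal{F}_1, \mathcal{F}_2) : \mathbb{R} \times V_Q \times V \to W^{-1,2}_\Omega \times W^{-2,2}_\Omega$ assembled in the preceding remark, following the template set by the proof of Proposition 4. The four hypotheses I would verify are (a) $\mathcal{F}(a,0,0) \equiv 0$, (b) continuity of $D_a \mathcal{F}$, $D_{(w_1, w_2)} \mathcal{F}$, and $D_{a(w_1,w_2)} \mathcal{F}$, (c) a one-dimensional kernel for $D_{(w_1, w_2)} \mathcal{F}(0,0,0)$ with a matching codimension-one range, and (d) the transversality condition at $a = 0$.

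Condition (a) reduces to checking that the nematic branch $(s_+(a)/2, 0)$ solves both equations in \eqref{E-L one D}, which follows from the identity $C s_+(a)^2 = -2A(a)$. Condition (b) is automatic because $\mathcal{F}$ is polynomial in $(w_1, w_2)$ with coefficients depending smoothly on $a$ through $A(a)$ and $s_+(a)$, composed with bounded linear differential operators between the relevant Sobolev spaces.

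The key computation is (c). At $(0,0,0)$, every coupling monomial in $\mathcal{F}_1$ and $\mathcal{F}_2$ carries either a factor of $w_2$ or a derivative of $w_1$, so the mixed partials $D_{w_2}\mathcal{F}_1(0,0,0)$ and $D_{w_1}\mathcal{F}_2(0,0,0)$ both vanish and $D_{(w_1, w_2)}\mathcal{F}(0,0,0)$ collapses to the block-diagonal form recorded in the remark. The upper block has strictly positive spectrum on $V_Q$ because $A(0) = \alpha_1(T_2^* - T_1^*) < 0$, hence is an isomorphism; the lower block is precisely the operator analysed in the proof of Proposition 4, whose kernel on $V$ is spanned by $\sin(qx)$. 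Consequently $\mathrm{Kernel}\,D_{(w_1,w_2)}\mathcal{F}(0,0,0) = \mathrm{span}\{(0, \sin(qx))\}$, and the Fredholm index zero of each block transfers to the full operator, yielding the matching codimension-one range.

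For (d), a short calculation gives $D_{(w_1,w_2)}\mathcal{F}(a,0,0)\cdot(0,\sin(qx)) = (0,\, a\sin(qx))$, using $(D_{xx}+q^2)\sin(qx) = 0$, so that $D_{a(w_1,w_2)}\mathcal{F}(0,0,0)\cdot(0,\sin(qx)) = (0,\sin(qx))$. By the block-diagonal structure, this vector lies in the range of the linearisation only if $\sin(qx) \in \mathrm{Range}\bigl(2B_0(D_{xx}+q^2)^2\bigr)$ on $V$, which was already ruled out in the proof of Proposition 4. Finally, the $\mathbb{Z}_2$ symmetry $\mathcal{F}(a, w_1, -w_2) = (\mathcal{F}_1, -\mathcal{F}_2)(a, w_1, w_2)$ forces the bifurcating branch to be odd in the parameter, promoting the generic transcritical branch to a symmetric pitchfork. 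I expect the main obstacle to be the patient bookkeeping of the many cross-terms in $D_{(w_1,w_2)}\mathcal{F}$ that establishes the block-diagonal structure in (c); once that structure is in hand, Crandall--Rabinowitz applies verbatim and delivers the claim.
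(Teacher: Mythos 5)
Your proposal is correct and takes essentially the same route as the paper: the paper's argument (contained in the remark preceding the proposition) likewise applies Crandall--Rabinowitz to the coupled operator $\mathcal{F}=(\mathcal{F}_1,\mathcal{F}_2)$, exploiting the block-diagonal structure of $D_{(w_1,w_2)}\mathcal{F}(0,0,0)$, the positive spectrum of $-2KD_{xx}-4A(0)$ on $V_Q$ (so the $Q$-block does not enlarge the kernel), and the kernel, Fredholm-index-zero, and transversality computations already done for the $u$-block in Proposition 4. You actually supply more detail than the paper, which leaves the explicit checks of (a), (b), (d) and the $\mathbb{Z}_2$ equivariance $\mathcal{F}(a,w_1,-w_2)=(\mathcal{F}_1,-\mathcal{F}_2)(a,w_1,w_2)$ underlying the pitchfork structure implicit.
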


\begin{proof}
In Proposition \ref{bifurcation proof}, we fix $Q\equiv \frac{s_+}{2}$ in  \eqref{E-L one D} for brevity, but the results also hold for coupled system \eqref{E-L one D} without treating $Q$ to be a constant by defining $\mathcal{F}(a,w_1,w_2)=(\mathcal{F}_1(a,w_1,w_2),\mathcal{F}_2(a,w_1,w_2)):\mathbb{R}\times V_Q \times V \rightarrow W^{-1,2}_\Omega \times W^{-2,2}_\Omega$
    where
    \begin{equation}  
    \begin{cases}
     \begin{aligned}  \mathcal{F}_1(a,w_1,w_2):=&-2KD_{xx}w_1+2A(a)(s_+(a)/2+w_1)+4C(s_+(a)/2+w_1)^3\\
    &+\frac{2B_0q^2w_2D_{xx}w_2}{s_+(a)}+\frac{2B_0q^4\left(1+\frac{w_1}{s_+(a)}\right)w_2^2}{s_+(a)},
    \end{aligned}\\   
    \begin{aligned}   \mathcal{F}_2(a,w_1,w_2):=&2B_0D_{xxxx}w_2+aw_2+cw_2^3+4B_0q^2\left(1+\frac{w_1}{s_+(a)}\right)D_{xx}w_2\\
    &+2B_0q^2\frac{w_2D_{xx}w_1}{s_+(a)}+4B_0q^2\frac{ D_x w_1 D_x w_2}{s_+(a)}+2B_0q^4\left(1+\frac{w_1}{s_+(a)}\right)^2w_2,
    \end{aligned}
    \end{cases}
    \end{equation}
$A(a)=\alpha_1(\frac{a}{\alpha_2}+T_2^*-T_1^*)$ and $s_+(a)=\sqrt{-2A(a)/C}$. The proof follows the same paradigm as in Proposition \ref{bifurcation proof}. One can check that
\begin{equation}
D_{(w_1,w_2)}\mathcal{F}(0,0,0)=\left(-2KD_{xx}-4A(0),2B_0(D_{xx}+q^2)(D_{xx}+q^2)\right)
\end{equation}
is also a Fredholm operator of index 0, and $dim(Kernel(D_{(w_1,w_2)}\mathcal{F}(0,0,0))=1$ since the spectrum \cite{ize1976bifurcation} of $-2KD_{xx}-4A(0),A(0)<0$ in $V_Q$ is positive which does not change the dimension of kernel space. 
\end{proof}

In Figure \ref{bifurcation}, we numerically calculate the N-S bifurcation, accomplished using the sine spectral method for $u$ and Fourier spectral method for $Q$ \cite{shen2011spectral} (see Appendix). This numerical scheme covers the boundary conditions in 
$V_Q\times V$. For $a>0$, the minimum eigenvalue at the nematic phase, as calculated both numerically and analytically, is both simple and positive, indicating stability. When $a=0$, a simple zero eigenvalue emerges with eigenvector $\eta=\sin(qx)$. As $a$ becomes negative, the nematic phase loses  stability and bifurcates into two smectic phases, corresponding to $u=t\sin(qx)+t^2 w_t$ and $u=-t\sin(qx)-t^2 w_t$ respectively, in pitchfork bifurcation.

The numerically calculated bifurcation diagram of the I-N-S phase transition v.s. temperature $T$ is shown in Figure \ref{Phase transition as temperature}. The isotropic phase with $u_I\equiv 0$ and $\Q_I\equiv \mathbf{0}$ is always a critical solution of \eqref{energy:simple}. When $T\geqslant T_1^*$, the isotropic phase is a global minimiser of \eqref{energy:simple}. For $T_1^* > T \geqslant T_2^*$, the isotropic phase loses stability, and the nematic phase with $u_N\equiv 0$ and $\Q_N\not\equiv \mathbf{0}$ becomes stable. For $T< T_2^*$, the nematic phase loses stability and the smectic phase with $u_S\not\equiv 0$ and $\Q_S\not\equiv \mathbf{0}$ becomes stable.

\begin{figure}[ht]
\centering
\includegraphics[width=.8\textwidth]{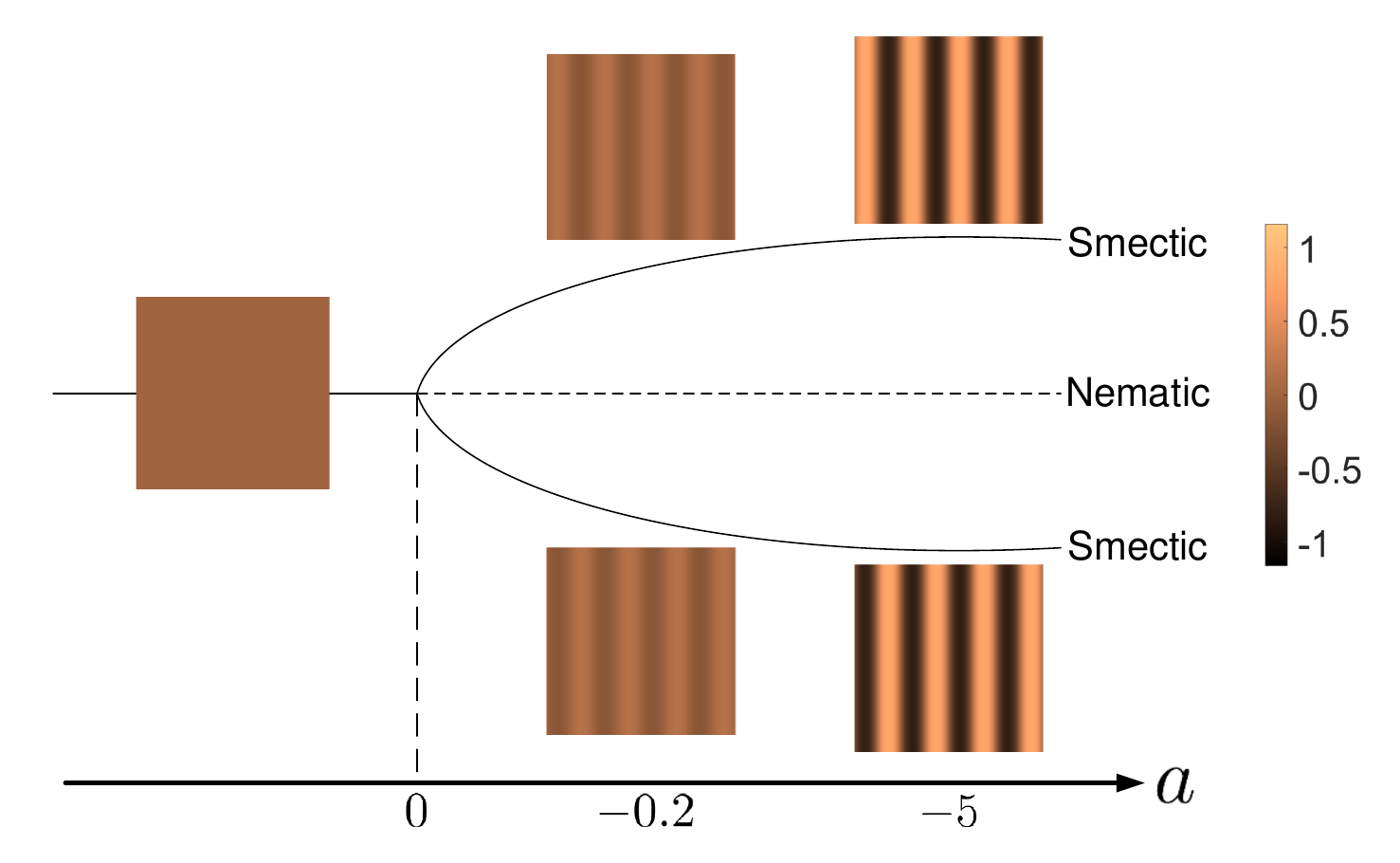}
\caption{Schematic illustration of the 
the N-S phase transition with $a=T+10$, $b=0$, $c=10$, $A=T$, $C=10$, $K=0.2$, $h=2\pi$, $q=4$, $B_0=0.001$, and the pitchfork bifurcation for $a<0$. The solid black line denotes a stable phase, while the dashed black line denotes an unstable phase in all figures. We numerically calculate the minimiser ($u_{min}$,$Q_{min}$) of \eqref{energy:simple} with various $a$, and plot $u_{min}$. 
We track the bifurcation 
across $-5\geqslant T \geqslant -15$ ($5\geqslant a \geqslant -5$).}
\label{bifurcation}
\end{figure}

\begin{figure}[ht]
\centering
\includegraphics[width=.8\textwidth]{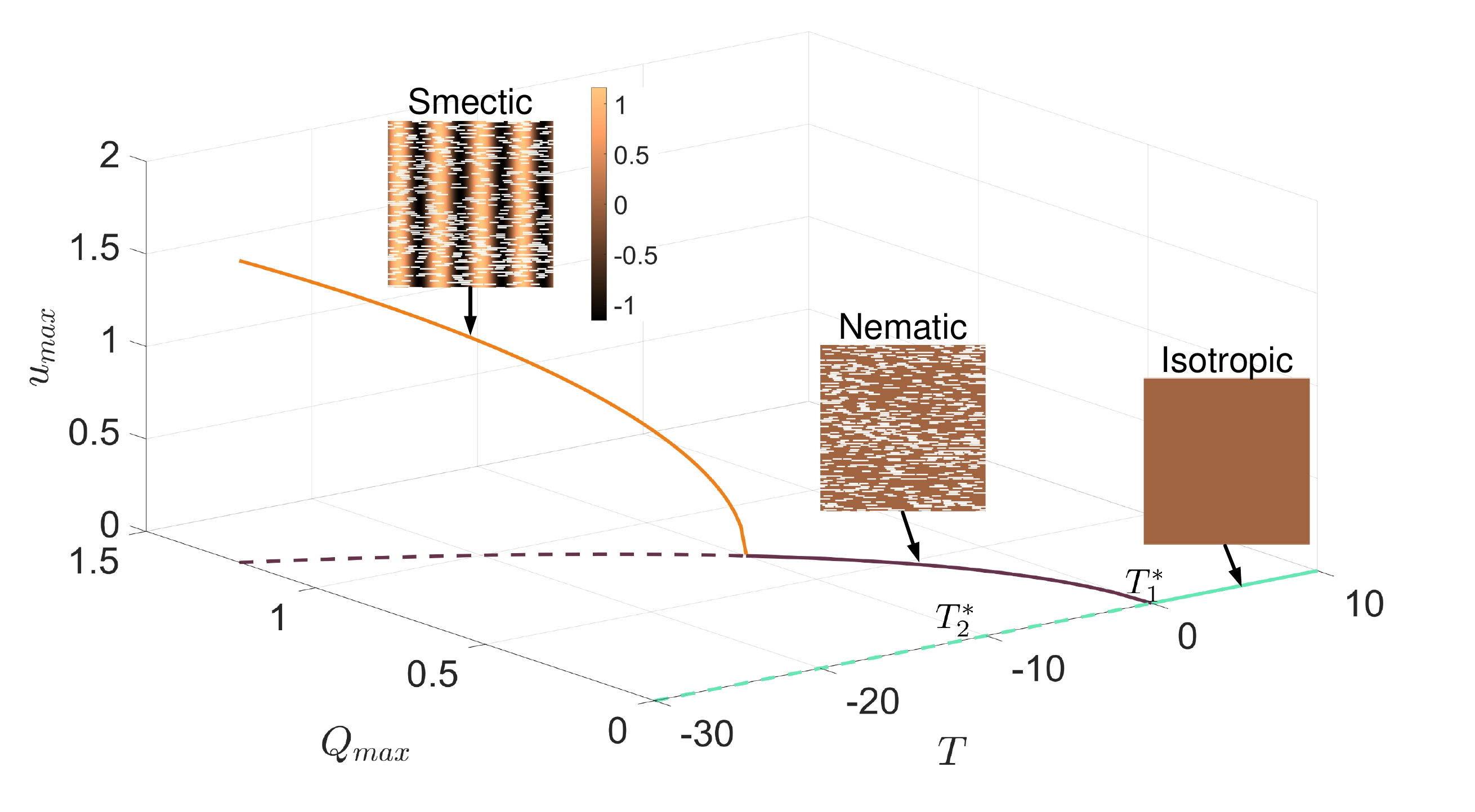}
\caption{Phase transitions for $T_1^*=0$, $T_2^*=-10$, $\alpha_1=\alpha_2=1$, $C=c=10$, $h=2\pi$, $q=4$, $B_0=0.001$. We use $u_{max}(T)$ and $Q_{max}(T)$, where $u_{max}(T)$ = $\max_{0\leqslant x \leqslant h} u^*_T(x)$ and $Q_{max}(T)$ = $\max_{0\leqslant x \leqslant h} Q^*_T(x)$, to represent the global minimizer $(Q^*_T(x),u^*_T(x))$ of $E_{1D}$ at $T$. For better visualisation, we plot the 2D $y$-invariants: $\bar{Q}(x,y)\equiv Q(x)$ and $\bar{u}(x,y) \equiv u(x)$.}
\label{Phase transition as temperature}
\end{figure}

\section{Smectics under confinement} \label{Sec: confinement}
In this section, we focus on the low temperature regime (i.e., $a < 0$ and $A < 0$) to investigate smectic profiles under confinement. In Sections \ref{sec:far} and \ref{sec:near}, we study the minimisers of the coupling energy, assuming a given rLdG $\Q$-profile, compatible with a defect-free perfectly ordered nematic state and a nematic defect respectively. These formal calculations give us some heuristic insight into how smectic layers respond to nematic profiles, with and without defects i.e. do defects repel smectic layers and do smectic layers concentrate near well-ordered nematic regions and if so, is there a correlation between the layer normal and the nematic director?
\subsection{The positional order far from defects}\label{sec:far}
Based on previous work \cite{liang2011nematic,yao2022defect}, we assume that far away from defects in confined geometries,
\begin{equation}\label{eq:Q}
\Q=s_+\left(\mathbf{n}\otimes \mathbf{n}-\frac{\mathbf{I}_2}{2}\right).
\end{equation} This models a perfectly ordered nematic state, which is also a minimiser of $f_{bn}$ in \eqref{energy_2D}, with arbitrary 2D nematic director $\mathbf{n}$.
Based on the analysis in Section \ref{phase transition}, we assume a simple periodic structure for $u$, compatible with a layer structure,, 
\begin{equation}\label{eq:u}
u(\mathbf{x})=k_1\cos(\tilde{q}\mathbf{k}\cdot \mathbf{x}),
\end{equation} 
where $\mathbf{k} = \frac{\nabla u}{|\nabla u|}$, if $|\nabla u|\neq 0$, is the layer normal, and $\tilde{q}$ is the wave number of the layer. Substituting \eqref{eq:Q} and \eqref{eq:u} into the coupling term \eqref{eq: coupling term}, we obtain
\begin{equation}
    \left|D^2u+q^2\left(\frac{\Q}{s_+}+\frac{\mathbf{I}_2}{2}\right)u\right|^2=k_1^2\left|-\tilde{q}^2\mathbf{k}\otimes \mathbf{k}\cdot u+q^2\mathbf{n}\otimes \mathbf{n}\cdot u\right|^2.
    \label{couple term}
\end{equation}
The above coupling term is minimised by $\tilde{q}=q$, $\mathbf{k}=\mathbf{n}$. Thus, we deduce that away from defects, we can interpret the phenomenological parameter $q$ in \eqref{energy_2D} to be the wave number of the smectic layers and the smectic layer normal is aligned with the nematic director, $\n$, in perfect agreement with the definition of SmA. Of course, these deductions do not shed light into the structure of arbitrary critical points of \eqref{energy_2D}. 

\subsection{The positional order near defects}\label{sec:near}
We can assume $\Q\equiv 0$ near defects in the rLdG model \cite{han2020reduced}. Substituting $\Q=0$ into the coupling term \eqref{eq: coupling term}, we obtain
\begin{equation}
    E_{couple}(\Q\equiv 0,u)=\int_{\Omega} B_0\left|D^2u+\frac{q^2u}{2}\mathbf{I}_2\right|^2 \mathrm{d}\x.
    \label{energy_couple}
\end{equation}
It's straightforward to verify that $u\equiv 0$ is a global minimiser since $E_{couple}(\Q\equiv 0,u)\geqslant 0 =E_{couple}(\Q\equiv 0,u\equiv 0)$. Our aim is to demonstrate that $u\equiv 0$ is indeed the unique minimiser, which implies that domains with defects do not support layered structures. We prove (a) $E_{couple}(\Q\equiv0,u)$ is convex, so that every minimiser $u^*$ is a global minimiser, i.e. $E_{couple}(u^*)=0$, and (b) if $E_{couple}(u^*)=0$, then $u^*\equiv 0$. (a) is obvious, since \eqref{energy_couple} is a squared norm of $\left(D^2u+\frac{q^2u}{2}\mathbf{I}_2\right)$. Next, we prove (b). If $\left|D^2u+\frac{q^2u}{2}\mathbf{I}_2\right|^2\equiv0$, then $u_{xy}\equiv0$, $u_{xx}\equiv u_{yy} =-\frac{q^2u}{2}$. From the regularity result in Proposition \ref{regularity}, we can assume that $u$ has $C^3$ regularity. Since $u_{xy} \equiv 0$, then $u_{xxy}=-\frac{q^2u_y}{2}\equiv0$, $u_{xyy}=-\frac{q^2u_x}{2}\equiv0$, which imply $u_x=u_y\equiv0$, and further $u\equiv C_0$ where $C_0$ is a constant. Then we deduce $C_0=0$ from $u_{xx}= u_{yy}=-\frac{q^2u}{2}\equiv 0$. Hence, (a) and (b) hold, which means that $u\equiv 0$ is the unique minimiser of \eqref{energy_couple}.

In Figure \ref{distribution_near_defect}, given a $\Q$-field on a square domain with edge length $\lambda$ and natural boundary conditions for $\Q$ and $u$, we plot the numerical minimiser, $u$, of \eqref{energy_2D} with relatively large $B_0$ and relatively small $a$ and $c$. The $u$ almost vanishes at the central point defect and produces a layered structure far away from the defect, in agreement with our analysis above. 

\begin{figure}[ht]
\centering
\includegraphics[width=.5\textwidth]{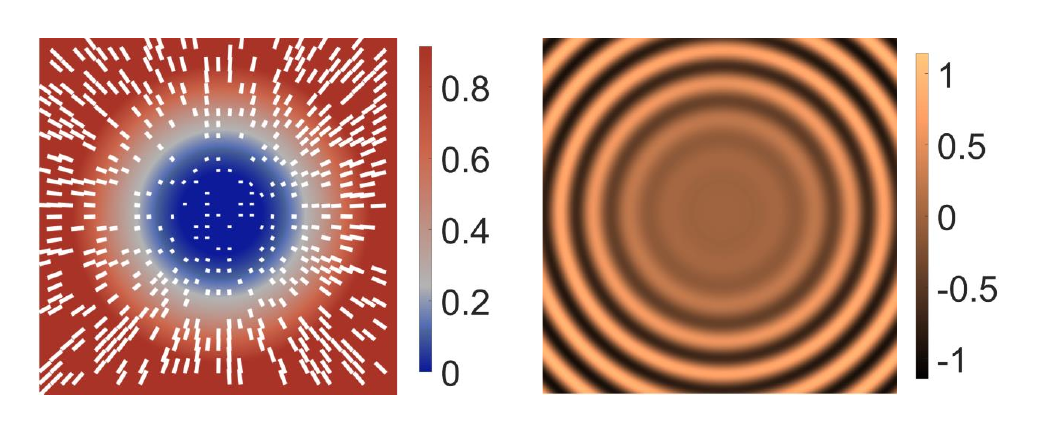}
\caption{The energy-minimising profile for $u$ with a fixed $\Q$-field  on the left. This $\Q$-field has a $+1$ central point defect. We use $a=-0.1$, $c=0.1$, $\lambda^2=30$, $q=2\pi$, $B_0=10^{-3}$. The colour bar of left plot is the order parameter $\sqrt{\text{Tr}(\Q^2)/2}$ and the white lines define the nematic director. We use the same color bar for $u$ as before.} 
\label{distribution_near_defect}
\end{figure}

\subsection{Structural transitions for smectics on square domains}\label{Confined Smectic}
We consider qualitative properties of energy minimisers of \eqref{energy_2D} on 2D square domains: $\Omega=[-\lambda,\lambda]^2$. By rescaling the system according to $\bar{\x} = \frac{\x}{\lambda}$, $\bar{E}=\frac{E
}{K}$, $\bar{\lambda}^2=\frac{2C\lambda^2}{K}$, $\bar{a}=\frac{a}{2C}$, $\bar{c}=\frac{c}{2C}$, $\bar{q}^2=\frac{Kq^2}{2C}$, $\bar{B}_0=\frac{2B_0C}{K^2}$, $\bar{A}=\frac{A}{2C}$ where the unit of $B_0$ is Nm$^2$, the unit of $K$ is N, the unit of $\lambda$ is m, and the unit of $q$ is m$^{-1}$. Then the non-dimensionalised energy is given by
\begin{equation}
\begin{aligned}
    \bar{E}(\Q,u)=\int_{[-1,1]^2} & \left( \bar{\lambda}^2 \left(\frac{\bar{a}}{2}u^2+\frac{\bar{c}}{4}u^4\right)+\frac{\bar{B}_0}{\bar{\lambda}^2}\left|D^2u+\bar{\lambda}^2\bar{q}^2\left(\frac{\Q}{s_+}+\frac{\mathbf{I}_2}{2}\right)u\right|^2 \right.\\
    &\quad+\left.\frac{1}{2}\left|\nabla \Q\right|^2+\bar{\lambda}^2\left(\frac{\bar{A}}{2}\text{tr}\Q^2+\frac{(\text{tr}\Q^2)^2}{8}\right) \right)\mathrm{d}\bar{x}.
    \label{non-dimensionalised energy}
\end{aligned}
\end{equation}

In the following, we drop all the bars, and the E-L equations of \eqref{non-dimensionalised energy} are
\begin{equation}
\begin{aligned}
\Delta \Q =& \lambda^2\left(A\Q+\frac{\text{tr}(\Q^2)\Q}{2}\right)\\
&+2B_0 q^2/s_+\cdot \left(u\cdot D^2u-\frac{tr(u \cdot D^2u)}{2}\mathbf{I}_2\right)+2\lambda^2B_0 q^4 \frac{\Q}{s_+^2}u^2,\\
\Delta^2 u =&- \lambda^4\left(\frac{a}{2B_0}u+\frac{c}{2B_0}u^3\right)-\lambda^2 D^2u:\left(q^2\left(\frac{\Q}{s_+}+\frac{\mathbf{I}_2}{2}\right)\right)\\
&-\lambda^2 \nabla\cdot \left(\nabla \cdot\left(q^2\left(\frac{\Q}{s_+}+\frac{\mathbf{I}_2}{2}\right)u\right)\right)-\lambda^4\left|q^2\left(\frac{\Q}{s_+}+\frac{\mathbf{I}_2}{2}\right)\right|^2 u.
\end{aligned}
\label{non-dimensionlized E-L equation}
\end{equation}
Regarding the boundary conditions, we assume Dirichlet tangent boundary conditions for the nematic director i.e. the director, $\mathbf{n}=\pm(1, 0)$ on the horizontal edges and $\mathbf{n}=\pm(0, 1)$ on the vertical edges, and the density is naturally distributed, i.e.,
\begin{equation} \label{tangent}
\begin{cases}
\Q= \begin{pmatrix}
        s_+L(x)/2  & 0 \\
        0 & -s_+L(x)/2  
    \end{pmatrix} \text{ on } y=\{1,-1\}, \\
    \Q= \begin{pmatrix}
        -s_+L(y)/2 & 0 \\
        0 & s_+L(y)/2 
    \end{pmatrix} \text{ on } x=\{1,-1\},\\
\left(D^2u+\lambda^2 q^2\left(\frac{\mathbf{Q}}{s_+}+\frac{\I_2}{2}\right)u\right): \vec{\nu} \otimes \vec{\nu}=0, \\
\left[\nabla \cdot \left(D^2u+\lambda^2 q^2\left(\frac{\Q}{s_+}+\frac{\I_2}{2}\right)u\right) \right]\cdot\vec{\nu}+\nabla \left[\left(D^2u+ \lambda^2 q^2\left(\frac{\mathbf{Q}}{s_+}+\frac{\I_2}{2}\right)u\right): \vec{\nu}\otimes \vec{\tau} \right]\cdot \vec{\tau}=0, 
\end{cases}
\end{equation}
where $\vec{\tau}$ is the tangential vector,
\begin{equation}
L(x)=
\begin{cases}
        \frac{x+1}{\epsilon_0}, -1\leqslant x\leqslant -1+\epsilon_0,\\
        1,|x|\leqslant 1-\epsilon_0,\\
        \frac{1-x}{\epsilon_0}, 1-\epsilon_0\leqslant x\leqslant 1,
\end{cases}
\end{equation}
is a trapezoidal function with a small enough $\epsilon_0$, to avoid the mismatch in the boundary conditions, at the square vertices \cite{shi2022nematic,canevari2017order,robinson2017molecular}.

\subsubsection{Large domain size limit}
In the $\lambda \rightarrow \infty$ limit or in the Oseen-Frank limit, we can assume that the interior profile is almost a minimiser of $f_{bn}$ in \eqref{energy_2D} with no defects \cite{10.1093/imamat/hxad031}. In the Oseen-Frank limit, the interior profile is
\begin{equation}
    \Q \equiv s_+\left(\n_0\times\n_0-\frac{\I}{2}\right),
    \label{assumption Q}
\end{equation} where $\n_0 = (\cos \theta, \sin \theta)$ and $\theta$ is a solution of the Laplace equation, subject to Dirichlet conditions compatible with \eqref{tangent}; the condition on $\theta$ originates from the nematic elastic energy. However, numerical results show that $\n_0$ is often constant away from the square edges, particularly for large $\lambda$ \cite{luo2012multistability}. 
Analogous to the discussion in Section \ref{sec:far}, we assume a constant $\n_0$ or $\theta$ in \eqref{assumption Q}  and assume a periodic structure for
\begin{equation}
    u=A_0\cos(\mathbf{k}\cdot \mathbf{x})
    \label{assumption u}
\end{equation}
with unknown $A_0$ and $\mathbf{k}$, where $A_0$ is the amplitude of the layers, $\frac{|\mathbf{k}|}{2\pi}$ (if $|k|\neq 0$) is the wave number of layers, and $\frac{\mathbf{k}}{|\mathbf{k}|}$ is the layer normal. 


By substituting \eqref{assumption u} and \eqref{assumption Q}, we have that the leading order terms in \eqref{non-dimensionalised energy}, in the $\lambda \to \infty$ limit are: 
\begin{equation}
\begin{aligned}      &\lambda^2\int_{\Omega}\left(\frac{au^2}{2}+\frac{cu^4}{4}+\frac{B_0}{\lambda^4}\left|D^2 u+\lambda^2 q^2\left(\frac{\Q}{s_+}+\frac{\mathbf{I_2}}{2}\right)u\right|^2+\frac{A}{2}\text{tr}\Q^2+\frac{(\text{tr}\Q^2)^2}{8} \right)\mathrm{d}\x\\
        &=\lambda^2\left(aA_0^2+\frac{3cA_0^4}{8}+2B_0A_0^2\left|q^2\n_0\times\n_0-\frac{\mathbf{k}\times\mathbf{k}}{\lambda^2}\right|^2+Constant+O\left(\frac{1}{|\mathbf{k}|}\right)\right).
        \label{assumption constraint}
\end{aligned}
\end{equation}
The leading order energy in \eqref{assumption constraint} is minimised by
\begin{equation}
    \mathbf{k}=q \lambda \n_0, A_0=\sqrt{\frac{-4a}{3c}},
    \label{wave_amplitude}
\end{equation} since the constant can be set to zero by adding a suitable constant to $f_{bn}$ in \eqref{assumption constraint}. These relations contain useful information: (i) the layer normal is aligned with $\n_0$; (ii) the number of layers is proportional to $\lambda$ and the (iii) layer thickness, $l$ is inversely proportional to $q$, in the $\lambda \to \infty$ limit. Further, the amplitude of the layer oscillations, $A_0$, depends on the parameters of $f_{bs}$ as expected, at least for energy minimisers in the $\lambda \to \infty$ limit.
In the first and second pairs of plots in Figure \ref{figure 5}, we fix $\n_0=(\sqrt{2}/2,\sqrt{2}/2)$ in \eqref{assumption Q}, and numerically calculate the minimiser of $u$ in \eqref{assumption constraint}, without assuming the periodic profile of $u$ in \eqref{assumption u}. In the numerical results, the wave number is proportional to $\lambda$; the layer normal follows the director $\mathbf{n_0}$; the amplitude of $u$ is close to $A_0$ in \eqref{wave_amplitude}. More specifically,  the numerically computed number of layers for $\lambda^2=50$ is $20$, which is equal to the predicted value $\frac{|\mathbf{k}|*2\sqrt{2}}{2\pi}=20$ (where $\frac{|\mathbf{k}|}{2\pi}$ denotes the number of layers in a unit length, and $2\sqrt{2}$ is the square diagonal length) in \eqref{wave_amplitude}, and the amplitude is 1.1432, close to the predicted value $A_0=\sqrt{\frac{-4a}{3c}}\approx 1.1547$ in \eqref{wave_amplitude}. For $\lambda^2 = 150$, the number of layers is $35$ and the predicted value is $\frac{|\mathbf{k}|*2\sqrt{2}}{2\pi}=34.6410$ in \eqref{wave_amplitude}. The numerically calculated amplitude is 1.1403, while the predicted value is $A_0=\sqrt{\frac{-4a}{3c}}\approx 1.1547$ in \eqref{wave_amplitude}. In the third pair plotted in Figure \ref{figure 5}, the director field is compatible with the boundary condition \eqref{tangent}. The number of layers along the diagonal is also $35$, and the numerically calculated amplitude is $1.1474$, both of which are also close to the predicted values.

\begin{figure}[ht]
\centering
\includegraphics[width=.98\textwidth]{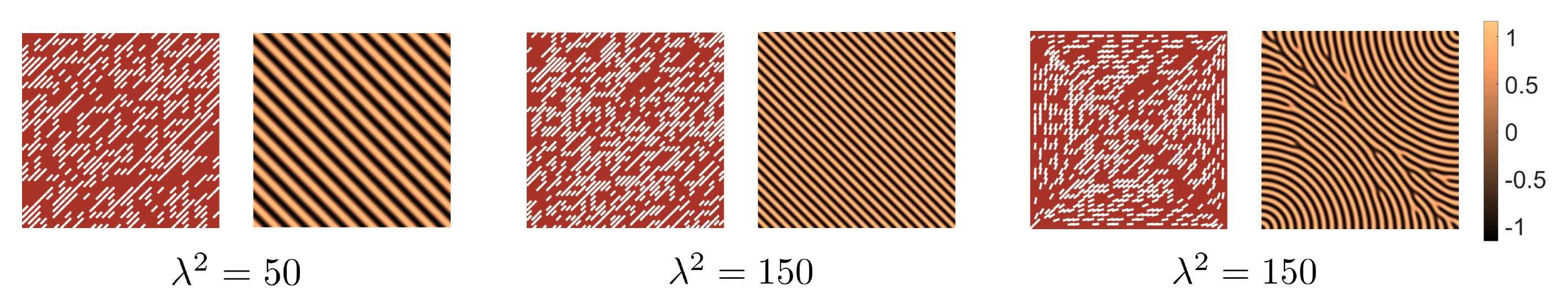}
\caption{The distribution of layers, u, is calculated by minimising \eqref{non-dimensionalised energy} with fixed $\Q$-field. In the left two plots, $\Q_{11}=0,\Q_{12}=\frac{s_+}{2}$, i.e. the nematic director is uniformly aligned along the line $y=x$. This is not compatible with the boundary conditions in \eqref{tangent}. In the right plot, $Q_{11}=s_+\cos(2\theta)/2$, $Q_{12}=s_+\sin(2\theta)/2$, $\theta$ is a solution of the Laplace equation, compatible with the boundary conditions in \eqref{tangent}. The parameters are $a=-5$, $c=5$, $B_0=10^{-3}$, $q=2\pi$, $A=-0.8359$.}
\label{figure 5}
\end{figure}
\subsubsection{Small domain size limit}
There is a unique global minimiser of the LdG energy on square domains, in the $\lambda \to 0$ limit, known as the Well Order Reconstruction Solution (WORS) \cite{yin2020construction,canevari2017order,kralj2014order} with two crossed line defects along the square diagonals. In this subsection, we show that in the $\lambda \rightarrow 0$ limit, the stable smectic critical points of \eqref{energy_2D} have the WORS as their $\Q$-profile, i.e. $\Q \rightarrow \Q_{WORS}$ and $u$ does not have a layer structure. In the $\lambda \rightarrow 0$ limit, we take a regular perturbation expansion of $\Q$ and $u$ in powers of $\lambda$ as shown below:
\begin{equation}
    \Q=\Q_0+\lambda\Q_1+\lambda^2\Q_2+\cdots,
    u=u_0+\lambda u_1+\lambda^2 u_2 + \cdots
\end{equation}
where $(\Q_0,u_0)$ is the solution of the following partial differential equation:
\begin{equation}
\begin{cases}
\Delta \Q_0 =2B_0\cdot q^2/s_+\cdot\left(u_0\cdot D^2u_0-\frac{tr(u_0\cdot D^2u_0)}{2}\mathbf{I_2}\right)\\
\Delta^2 u_0 =0
\end{cases}
\label{limit equation},
\end{equation}
subject to the boundary condition:
\begin{equation}
    \begin{cases}
\Q_0= \begin{pmatrix}
        s_+L(x)/2 & 0 \\
        0 & -s_+L(x)/2 
    \end{pmatrix} \text{ on } y=\{1,-1\}, \\
    \Q_0= \begin{pmatrix}
        -s_+L(y)/2 & 0 \\
        0 & s_+L(y)/2 
    \end{pmatrix} \text{ on } x=\{1,-1\}\\
    D^2u: \vec{\nu} \otimes \vec{\nu}=0, 
    \left[\nabla \cdot \left(D^2u\right) \right]\cdot\vec{\nu}+\nabla \left(D^2u: \vec{\nu}\otimes \vec{\tau}\right)\cdot \vec{\tau}=0, \text{ on } \partial \Omega.
\label{limit boundary condition}
\end{cases}
\end{equation}
\begin{proposition}\label{proposition:lambda limit to zero}
    The solutions of \eqref{limit equation} with boundary conditions \eqref{limit boundary condition} are 
    \begin{equation}
    \begin{cases}
        \Q_0(x,y)= \begin{pmatrix}
        Q_0(x,y) & 0 \\
        0 & -Q_0(x,y) 
    \end{pmatrix},\\
    u_0(x,y)=k_1x+k_2y+k_3,k_i \in \mathbb{R},i=1,2,3,
    \end{cases}
    \end{equation}
where
\begin{equation}\label{eq: WORS}
\begin{aligned}
    Q_0(x,y)=&\sum_{k \text{ odd}}\frac{4 s_+\sin\left(\frac{k\pi\epsilon_0}{2}\right)}{k^2\pi^2\epsilon_0} \sin\left(\frac{k\pi (x+1)}{2}\right)\frac{\sinh\left(\frac{k \pi \left(1-y\right)}{2}\right)+\sinh\left(\frac{k \pi \left(1+y\right)}{2}\right)}{sinh(k\pi)}\\
    &-\sum_{k \text{ odd}}\frac{4s_+ \sin\left(\frac{k\pi\epsilon_0}{2}\right)}{k^2\pi^2\epsilon_0} \sin\left(\frac{k\pi (y+1)}{2}\right)\frac{\sinh\left(\frac{k \pi \left(1-x\right)}{2}\right)+\sinh\left(\frac{k \pi \left(1+x\right)}{2}\right)}{\sinh(k\pi)}.
\end{aligned}
\end{equation}
\end{proposition}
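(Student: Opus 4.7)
The plan is to proceed in three stages: first, show that $u_0$ must be affine; second, reduce the $\Q_0$-equation to a scalar Laplace problem; third, solve that problem by separation of variables to recover the explicit series. For the first stage, I would multiply $\Delta^2 u_0 = 0$ by $u_0$ and integrate by parts twice. The key identity reads
\begin{equation*}
\int_\Omega |D^2 u_0|^2 \, \mathrm{d}\x = \int_\Omega u_0 \, \Delta^2 u_0 \, \mathrm{d}\x + \oint_{\partial\Omega} \left[(D^2 u_0)\vec{\nu}\right]\cdot \nabla u_0 \, \mathrm{d}S - \oint_{\partial\Omega} u_0 \, \frac{\partial (\Delta u_0)}{\partial \vec{\nu}} \, \mathrm{d}S.
\end{equation*}
Both boundary terms vanish by the natural boundary conditions $(D^2 u_0)\vec{\nu} = 0$ and $\partial(\Delta u_0)/\partial\vec{\nu} = 0$ in \eqref{limit boundary condition} (using that $\nabla \cdot D^2 u_0 = \nabla \Delta u_0$ for symmetric Hessians), while the bulk term vanishes by the biharmonic equation. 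Thus $\int_\Omega |D^2 u_0|^2 = 0$, forcing $D^2 u_0 \equiv 0$, and hence $u_0(x,y) = k_1 x + k_2 y + k_3$ for constants $k_1, k_2, k_3 \in \mathbb{R}$.

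For the second stage, since $u_0$ is affine one has $D^2 u_0 \equiv 0$, so the right-hand side of the $\Q_0$-equation in \eqref{limit equation} is identically zero and one is left with $\Delta \Q_0 = 0$ in $\Omega$. The Dirichlet data in \eqref{limit boundary condition} are symmetric, traceless, and have vanishing off-diagonal entry $(\Q_0)_{12}$ on all four edges; uniqueness for the scalar Laplace equation then gives $(\Q_0)_{12} \equiv 0$ in $\Omega$, and the symmetric-traceless structure forces $\Q_0 = \mathrm{diag}(Q_0, -Q_0)$, where $Q_0 := (\Q_0)_{11}$ solves
\begin{equation*}
\Delta Q_0 = 0 \text{ in } \Omega, \qquad Q_0 = s_+ L(x)/2 \text{ on } y = \pm 1, \qquad Q_0 = -s_+ L(y)/2 \text{ on } x = \pm 1.
\end{equation*}

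For the third stage, I would decompose $Q_0 = Q_0^{(1)} + Q_0^{(2)}$, where $Q_0^{(1)}$ carries the $y = \pm 1$ data with zero data on $x = \pm 1$, and $Q_0^{(2)}$ the reverse. Each sub-problem separates on the rectangle: the eigenfunctions vanishing on the two ``zero'' edges are $\cos(k\pi x/2)$ or $\cos(k\pi y/2)$ for odd $k$, and the transverse dependence is the symmetric combination $[\sinh(k\pi(1-\cdot)/2)+\sinh(k\pi(1+\cdot)/2)]/\sinh(k\pi)$, which equals one at the active edges (equivalently $\cosh(k\pi\cdot/2)/\cosh(k\pi/2)$, via a sum-to-product identity for $\sinh$). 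The Fourier coefficients are then obtained by splitting $\int_{-1}^1 L(t)\cos(k\pi t/2)\,\mathrm{d}t$ into the flat middle segment and the two linear ramps of width $\epsilon_0$; integration by parts on the ramps produces the factor $\sin(k\pi\epsilon_0/2)$ and cancels the $\cos(k\pi\epsilon_0/2)/(k\pi)$ contribution from the flat segment, leaving the coefficient $8 s_+ \sin(k\pi\epsilon_0/2)/(k^2\pi^2 \epsilon_0)$. Summing the two sub-problems, with the boundary data on $x = \pm 1$ entering with opposite sign to that on $y = \pm 1$, produces the two sums in \eqref{eq: WORS} with the indicated minus sign between them.

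The main obstacle is the first stage: the two natural boundary conditions in \eqref{limit boundary condition} are a priori independent, and one must identify the correct twice-integration-by-parts in which they jointly extinguish every boundary contribution — naive approaches yield leftover pieces involving $\Delta u_0$ and $\partial u_0/\partial\vec{\nu}$ on the boundary that are not obviously zero. Once that identity is in place, the second stage is classical elliptic uniqueness, and the third stage is a routine separation-of-variables computation on a rectangle with the piecewise-linear boundary profile $L$.
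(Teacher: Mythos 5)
Your proposal is correct and follows essentially the same route as the paper: decouple and solve the biharmonic problem for $u_0$ first, reduce the $\Q_0$ system to a scalar Laplace problem with the WORS boundary data, and finish by separation of variables with the trapezoidal profile $L$. The only cosmetic difference is in the first stage, where the paper observes that $u_0$ is a critical point of the convex functional $E_0(u)=\int_{[-1,1]^2}|D^2u|^2\,\mathrm{d}\x$ whose natural boundary conditions are exactly \eqref{limit boundary condition} (so $E_0(u_0)=0$ and $D^2u_0\equiv 0$), whereas you write out the equivalent twice-integrated-by-parts energy identity explicitly --- the same mechanism, carried out by hand.
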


\begin{proof}
Since the boundary-value problem for $u_0$ is not dependent on $\Q_0$, we note that $u_0$ is actually the critical point of the following energy functional,
\begin{equation} \label{eq:lambda0}
    E_0(u)=\int_{[-1,1]^2}|D^2 u|^2 \mathrm{d}\x,
\end{equation}
with natural boundary conditions. $E_0(u)$ is convex on $u$, and thus all the critical points are the global minimiser, i.e. $E_0(u_0)=0$. Consequently, $u_0$ satisfies $D_{xx}u_0=D_{yy}u_0=D_{xy}u_0\equiv 0$, so that $u_0$ is a linear function, 
\begin{equation}
    u_0=k_1x+k_2y+k_3,k_i \in \mathbb{R},i=1,2,3.
\end{equation}
Given a linear $u_0$, the partial differential equation of $\Q_0$ simplifies to 
\begin{equation}
\begin{cases}
    \Delta \Q_0 =0,\\
    \Q_0= \begin{pmatrix}
        s_+L(x)/2 & 0 \\
        0 & -s_+L(x)/2 
    \end{pmatrix} \text{ on } y=\{1,-1\},\\
    \Q_0= \begin{pmatrix}
        -s_+L(y)/2 & 0 \\
        0 & s_+L(y)/2 
    \end{pmatrix} \text{ on } x=\{1,-1\},
\end{cases}
\end{equation}
which can be solved by the separation of variables. A standard computation for the WORS profile as in \cite{fang2020surface} yields the results in \eqref{eq: WORS}.
\end{proof}
\begin{remark}
Clearly, the solution of \eqref{limit equation} with the boundary condition \eqref{limit boundary condition} is not unique because all linear functions, $u_0$ are compatible with the leading order problem \eqref{limit equation}. 
The implication supports our physical intuition that small domains cannot accommodate layer structures. One can directly check that the eigenvector of $\Q_0$ is either horizontal or vertical, and $\Q_0(x,x)=\Q_0(x,-x)=0$, which means $\Q_0$ has two line defects along the diagonals of square (also see Figure \ref{Asymptotic}).
\end{remark}

Next, we solve for $Q_1,Q_2,u_1,u_2$ to examine the effects of small perturbations, for small but non-zero $\lambda$. Up to $O(\lambda)$, the governing partial differential equations for $\Q_1$ and $u_1$ are
\begin{equation}
\begin{cases}
    \Delta^2 u_1=0,\\
    \Delta \Q_1=0
    \end{cases}
    \label{first order term equation}
\end{equation}
with the boundary conditions
\begin{equation}
\begin{cases}
    \Q_1=0, \text{ on } \partial \Omega,\\ 
    D^2u_1\cdot \vec{\nu}=\mathbf{0}, \left[\nabla \cdot D^2u_1\right] \cdot\vec{\nu}=0, \text{ on } \partial \Omega.
    \end{cases}
    \label{first order term boundary condition}
\end{equation}
which only has the trivial solution, i.e. $Q_1\equiv 0$ and linear $u_1$. Up to $O(\lambda^2)$, the governing partial differential equations for $\Q_2$ and $u_2$ are
\begin{equation}
\begin{cases}
    \Delta \Q_2= A\cdot\Q_0+\frac{\text{tr}(\Q_0^2)\Q_0}{2}\\
    \qquad  \qquad +2B_0\cdot q^2/s_+\cdot \left(u_0\cdot D^2u_2-\frac{tr(u_0 \cdot D^2u_2)}{2}\mathbf{I}_2\right)+2B_0\cdot q^4\cdot \frac{\Q_0}{s_+^2}u_0^2,\\
        \Delta^2 u_2=-\nabla\cdot\left(\nabla\cdot\left(q^2\left(\frac{\Q_0}{s_+}+\frac{\I_2}{2}\right)u_0\right)\right),
    \end{cases}
    \label{second order term equation}
\end{equation}
with the boundary conditions
\begin{equation}
\begin{cases}
    \Q_2=0,  \\ 
   \left(D^2u_2+ q^2\left(\frac{\Q_0}{s_+}+\frac{\I_2}{2}\right)u_0\right): \vec{\nu}\otimes \vec{\nu}=0,\\
    \left[\nabla \cdot \left(D^2u_2+ q^2 \left(\frac{\Q_0}{s_+}+\frac{\mathbf{I}_2}{2}\right)u_0\right) \right]\cdot\vec{\nu}+\nabla\left[\left(D^2u_2+ q^2\left(\frac{\mathbf{Q}_0}{s_+}+\frac{\I_2}{2}\right)u_0\right): \vec{\nu}\otimes \vec{\tau} \right]\cdot \vec{\tau}=0.
    \end{cases}
    \label{second order term boundary condition}
\end{equation}
The differential equation for $\Q_2$ can be numerically solved using the finite difference method, but the boundary-value problem for $u_2$ is difficult to solve because of the complex boundary condition, which involves the second and third derivatives. Fortunately, the solution of \eqref{second order term equation} with the boundary condition \eqref{second order term boundary condition} is a critical point of the following functional
\begin{equation}
    \Tilde{E}(u_2)=\int_{[-1,1]^2} \left|D^2u_2+q^2\left(\frac{\Q_0}{s_+}+\frac{\mathbf{I}_2}{2}\right)u_0\right|^2 \mathrm{d}\x,
\end{equation}
without any boundary anchoring. By minimizing the above energy, we can numerically calculate $u_2$, which exhibits some oscillation along the directors of WORS, as shown in Figure \ref{Asymptotic}(a). For $\lambda^2=0.01$, the density distribution, $u$, is no longer a linear function and tends to demonstrate a layer structure.
\begin{figure}
\centering
\includegraphics[width=.99\textwidth]{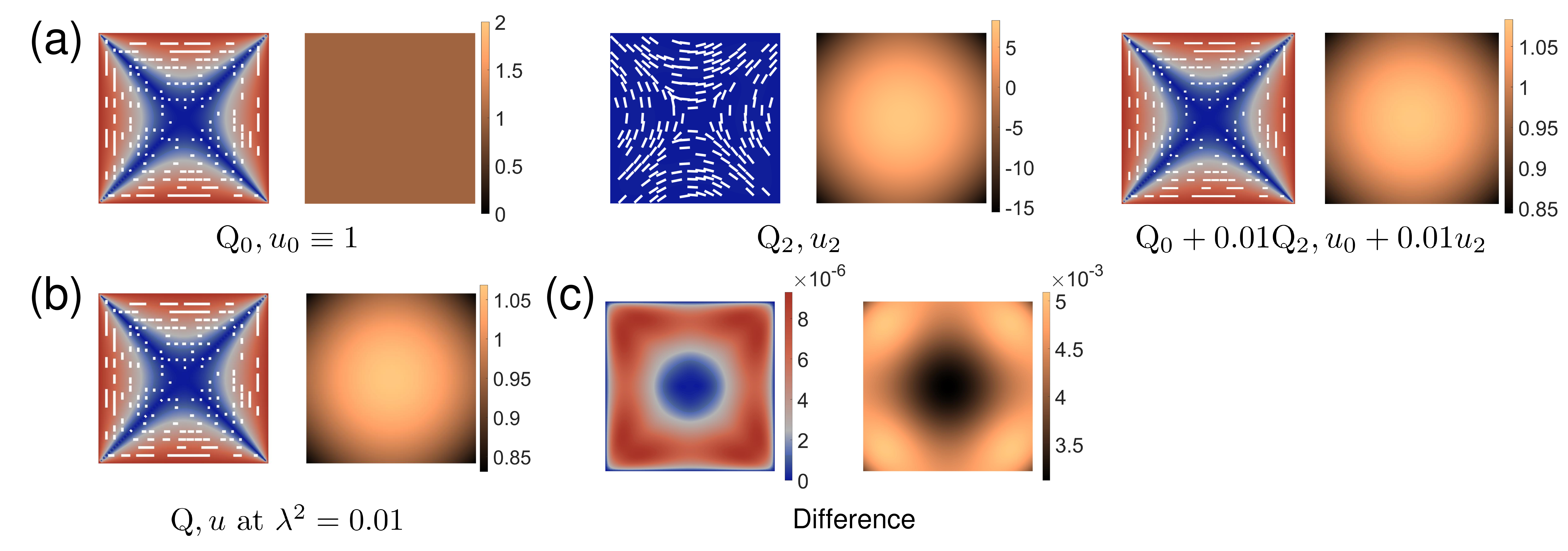}
\caption{(a) From the left to the right are the profiles of $\Q_0=Q_{WORS},u_0\equiv1$, $\Q_2$ and $u_2$ solved from \eqref{second order term equation}, and $\Q_0+0.01\Q_2,u_0+0.01u_2$ 
for $\lambda^2=0.01$. (b) The full numerical solution for $\lambda^2$=0.01. (c) The plot of the difference between $\Q_0+0.01\Q_2,u_0+0.01u_2$ and the numerical solution ($\Q_{\lambda=0.1},u_{\lambda=0.1}$) for $\lambda^2=0.01$, i.e. $\sqrt{\text{Tr}  (\Q_0+0.01\Q_2-\Q_{\lambda=0.1})^2/2}$ and $u_0+0.01u_2-u_{\lambda=0.1}$. The parameters are $a=-5$, $c=5$, $B_0=10^{-3}$, $q=2\pi$, and $A=-0.8359$.}
\label{Asymptotic}
\end{figure}
\subsubsection{Modest domain size} \label{sec:modest}
In this section, we numerically study the stable smectic critical points of \eqref{non-dimensionalised energy} with modest $\lambda$, which complements the $\lambda \rightarrow 0$ and $\lambda \rightarrow \infty$ problems. Unless otherwise specified in the figure caption, the default parameter values are: $a=-5$, $c=5$, $B_0=10^{-3}$, $q=2\pi$ (corresponding to a molecular length of approximately $10^{-7}$ m), and $A = - 0.8359$ (which is calculated from the parameters in \cite{yin2020construction,shi2023hierarchies,robinson2017molecular}). 
For the fixed value of temperature as coded in the values of $a$ and $A$, as the domain size increases from $\lambda^2=1$ to $\lambda^2=30$, three stable smectic states are shown in Figure \ref{configuration as lambda}.  These states are the minimisers of \eqref{non-dimensionalised energy} and have the lowest energy according to our numerical calculations. We can recognise the $\Q$-profiles from the LdG studies: the WORS with two line defects on diagonals, the BD with two line defects localised near opposite edges, and the D state with the nematic director along a square diagonal and with no interior defects \cite{yin2020construction,robinson2017molecular}. The corresponding $u$ profiles have layer normal along the director of $\Q$ profiles. The BD-like and D-like smectic states can be observed in experiments in \cite{cortes2016colloidal}. We note that the BD-like state, which is unstable in the rLdG theory for nematic phase, gains stability in the mLdG framework, at least for some values of $\lambda$. The WORS-like state is hard to achieve practically because of the small domain constraint, which could correspond to nanoscale domains.

\begin{figure}
\centering
\includegraphics[width=.99\textwidth]{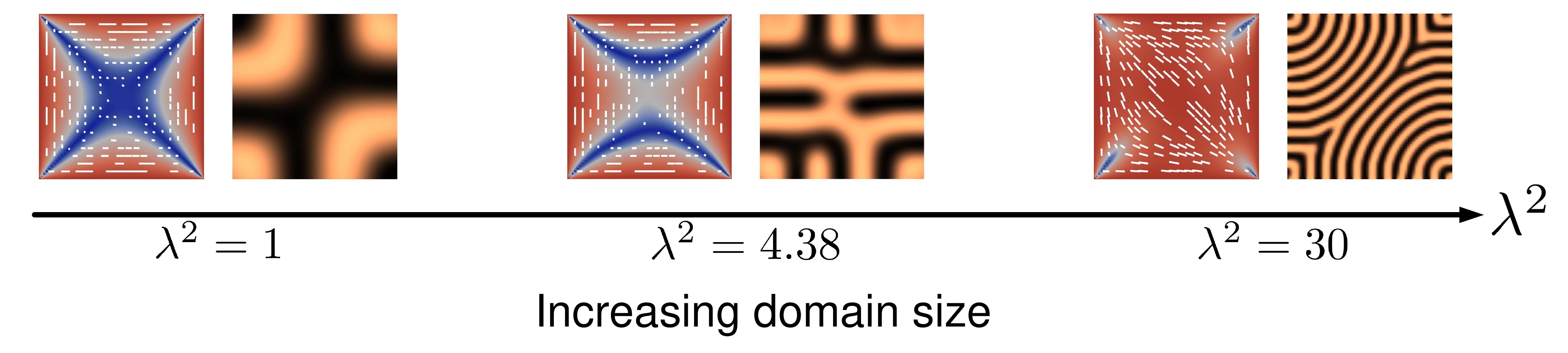}
\caption{From small $\lambda$ to large $\lambda$, the nematic director of minimisers of \eqref{non-dimensionalised energy} exhibit the WORS, BD and D profiles respectively. The colour bar is the same as in Figure \ref{distribution_near_defect}.}
\label{configuration as lambda}
\end{figure}

To further explore the interplay between the positional order and orientational order, we track the solution branches, with small and large $\lambda$, as temperature decreases. In Figure \ref{phase transition confinement}(a), for small $\lambda^2=4.38$, the stable state is the nematic WORS (where $u\equiv0$) for high temperatures. As the temperature decreases, the BD-state appears gradually and separates the cross-line defects into two distinct line defects, localised near a pair of opposite square edges.
We speculate that the stability of the BD-like smectic state is enhanced by the positional order parameter $u$, to avoid more dislocations in the WORS-like smectic state. In Figure \ref{phase transition confinement}(b), for a large domain captured by $\lambda^2=30$, the nematic D state crystallizes into the smectic D-like state, which reflects the memory of the director in the N-S phase transition.

\begin{figure}
\centering
\includegraphics[width=.9\textwidth]{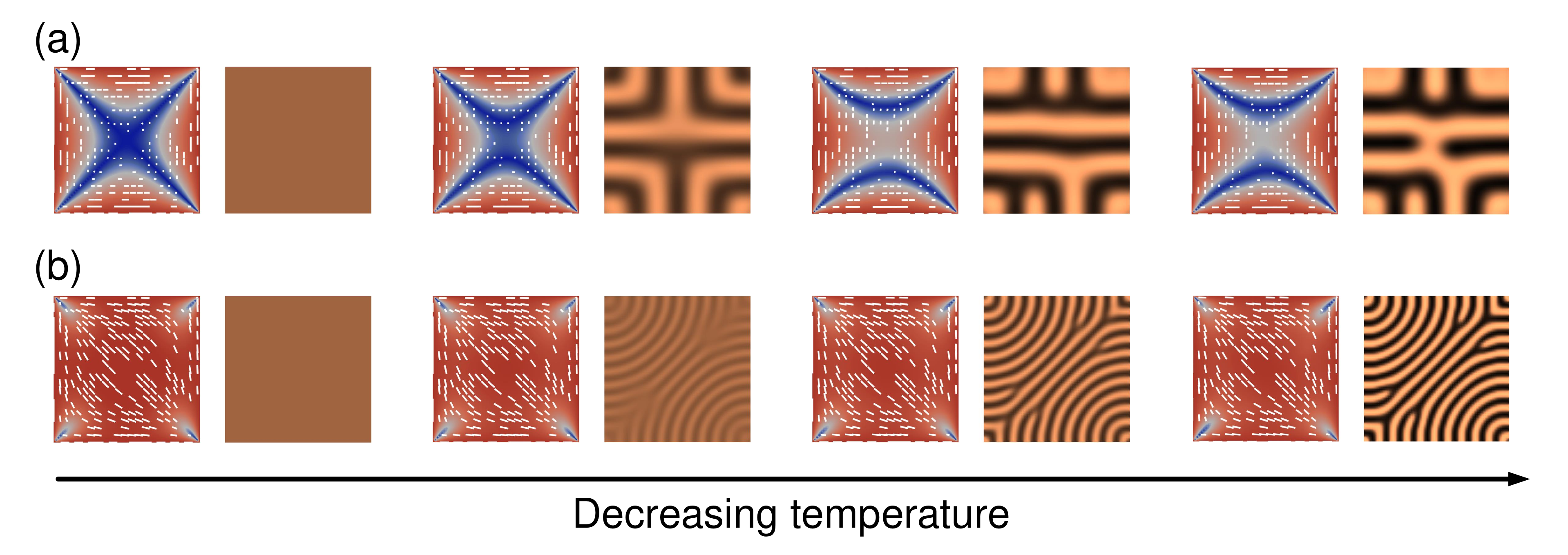}
\caption{
(a) The structural transition from the WORS-type to the BD-type smectic with decreasing temperature for $\lambda^2=4.38$, and the rescaled temperature-dependent parameters are $a=1, -0.2, -2, -5$, $A=-0.4286, -0.5916, -0.7544, -0.8359$ from left to right respectively. (b) Depicts the structural transition between the D-type smectic and the  crystallised D-type smectic with decreasing temperature for $\lambda^2=30$, with the same re-scaled temperature-dependent parameters as in (a). Colour bar as in Figure \ref{distribution_near_defect}.}
\label{phase transition confinement}
\end{figure}

In the LdG theory for nematics, we can find at least six different (meta)stable critical points of the LdG (rLdG) energy on square domains with tangent boundary conditions, when $\lambda$ is large enough. There are two D states, for which the nematic director aligns along one of the square diagonals and four R states, for which the director rotates by $\pi$ radians between two opposite square edges. The profiles of D and R are unique in nematics, once we take symmetry into account \cite{yin2020construction,robinson2017molecular}. However, in the mLdG model, we can find multiple (meta)stable D-like and R-like states with subtle differences in the corresponding $u$ profiles (see Figure \ref{uniaxial stable state}(a)). This could suggest a frustrated energy landscape, implying the existence of numerous similar energy minimisers that differ slightly in their structural details. Intuitively, the $u$-dependent energy densities $f_{bs}(u)$ and $f_{int}(\Q,u)$ are highly nonlinear with respect to $u$ and involve the $L_2$-norm of the second derivative of $u$, which contributes to a frustrated energy landscape. In contrast, the $\Q$-dependent $f_{LdG}(\Q)$ involves only the $L_2$-norm of the first derivative of $\Q$, resulting in a smoother energy landscape \cite{yin2020construction}. We first choose relatively large values for $|a|$, $c$, and $B_0$ to ensure that the frustrated $f_{bs}(u)$ and $f_{int}(\Q,u)$ dominates. By using the saddle dynamics \cite{2019High}, we search for the transition pathway between the R1 and R2 states, via an index-1 transition state R3, in Figure \ref{uniaxial stable state}(b). In such a frustrated energy landscape, it is difficult for an R-like state to break the energy barrier and reach the lower energy D-like states, because the local minima around it are similar R-like states, as shown in Figure \ref{uniaxial stable state}(d). One strategy to alleviate the frustration is to reduce the parameters $|a|, c, B_0$ i.e. make the nematic or LdG energy dominant. By reducing \(|a|\), \(c\), and \(B_0\) in the same ratio, the Euler-Lagrange equation for \(u\) in \eqref{E-L equation} remains unchanged, and the minimizer profiles are not significantly altered. In Figure \ref{uniaxial stable state}(c) and Figure \ref{uniaxial stable state}(e), with reduced parameters $|a|, c, B_0$, the energy landscape is smoother and we find a transition pathway between R-like and D-like states via an index-1 J-like state. This transition pathway is analogous to its purely rLdG counterpart in \cite{yin2020construction,shi2022nematic}. 

\begin{figure}
\centering
\includegraphics[width=.99\textwidth]{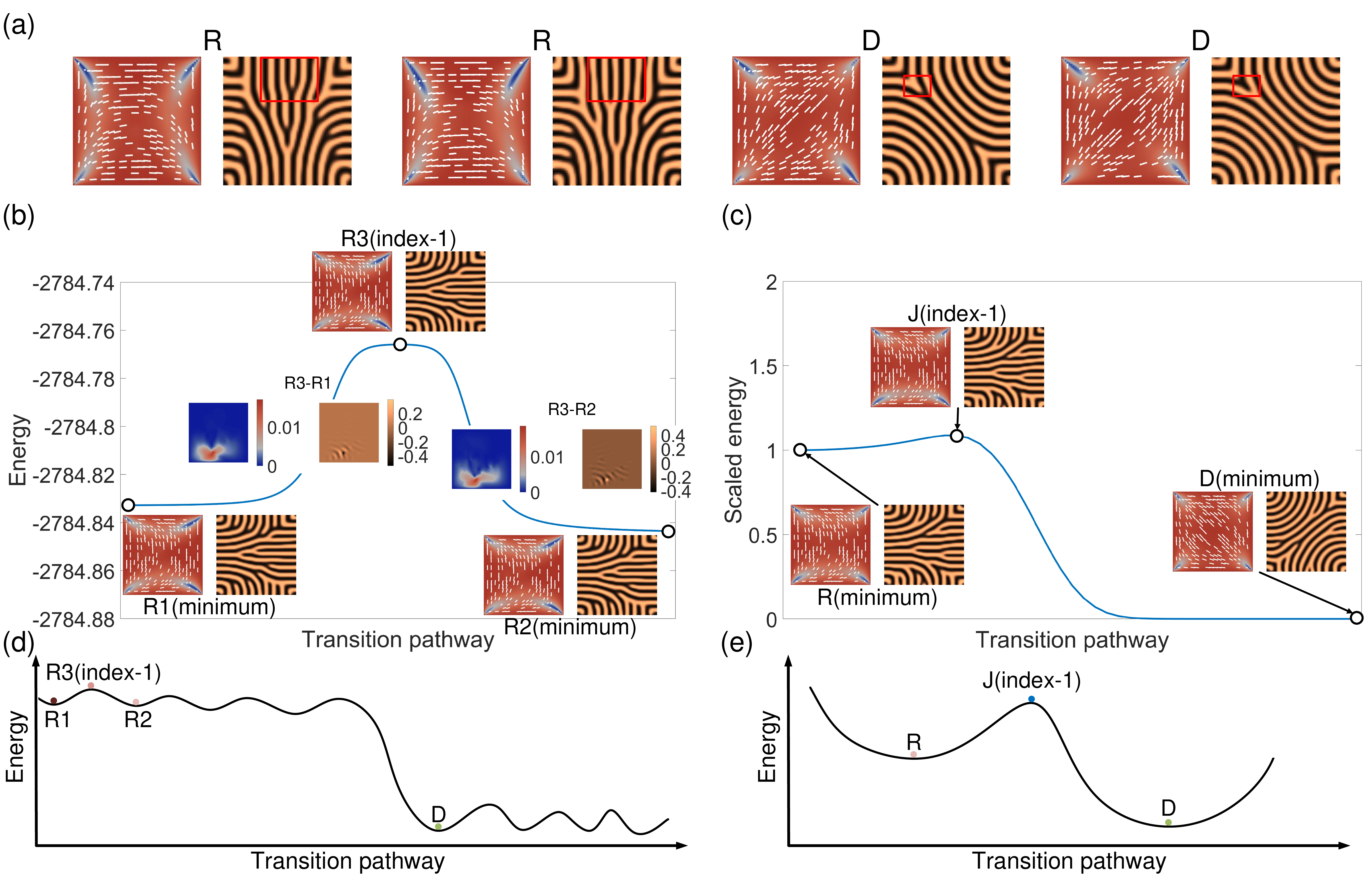}
\caption{(a) R and D type mLdG energy minima for $\lambda^2=30$. The domain enclosed by red lines demonstrates the difference between the two R states and two D states, respectively. (b) A frustrated transition pathway with $\lambda^2=30, B_0=10^{-3},a=-5,c=5$. R1 and R2 are energy minima, and R3 is an index-1 transition state. R3-R1 (R2) is the slight pointwise difference between the R3 and R1 (R2). (c) The transition pathway between locally stable R state and more stable D state via index-1 transition state J with $\lambda^2=30, B_0=10^{-5},a=-0.05,c=0.05$, and the $y$-axis is the scaled energy, $E_{scaled}=e^{E-E(R)}$, for better visualization. The schematics in (d) and (e) represent the frustrated energy landscape in (b) and the smooth energy landscape in (c), respectively.}
\label{uniaxial stable state}
\end{figure}

\section{Conclusion and discussion} \label{conclusion}
We model smectic configurations in the mLdG framework, which is essentially the LdG framework for nematic liquid crystals augmented by a positional/smectic order parameter, $u$, and coupling between the nematic and smectic order parameters. This model was proposed in \cite{xia2021structural} with multiple phenomenological parameters: $a, c$, $B_0$ and $q$. We do various formal calculations to give some physical interpretation of these coefficients, e.g., $a$ should depend on the temperature to capture the N-S phase transition and for sufficiently large domains, the amplitude of the density fluctuations depends on $a$ and $c$, the number of layers is proportional to characteristic geometric parameters, the layer normal is aligned along the nematic director and $q$ is inversely proportional to the smectic layer thickness  and can be interpreted as the layer wave number, at least for mLdG energy minimisers. The smectic layer thickness is often related to typical molecular lengths - the length of the long axis of  a rod-like liquid crystal molecule. Our work allows for a more direct and meaningful comparison with experimental parameters.

More precisely, we first prove the existence and regularity of a minimiser of the mLdG energy in suitable admissible spaces, for three different types of experimentally relevant boundary conditions. Then, we prove that the mLdG energy can model the I-N-S phase transition with respect to temperature. 
We then investigate structural phase transitions on square domains (with edge length $\lambda$) subject to tangent boundary conditions for the nematic $\Q$-tensor. Our primary findings are as follows: (a) in the $\lambda\rightarrow 0$ limit or for (very) small square domains, the mLdG energy minimiser is essentially the nematic WORS without a layer structure; (b) in the $\lambda\rightarrow \infty$ limit or for large square domains, the number of layers increases assuming that $B_0$ and $q$ are independent of temperature and $\lambda$; (c) for a finite but non-zero $\lambda$, the mLdG energy minimisers favor the WORS or BD profiles for small square domains, but prefer to bend the D profiles for large square domains, which is in agreement with experimental results in \cite{cortes2016colloidal}. We find multiple (meta)stable states without interior defects and the transition pathways between them, for large square domains which demonstrates a frustrated energy landscape. 


There are several extensions of this work. 
We plan to generalise our work on square domains to arbitrary 2D polygons, in parallel to the work on polygons in the rLdG/nematic framework carried out in \cite{han2020reduced}, along with generalisations to 3D geometries e.g., cuboid \cite{shi2024multistability} and spherical shells \cite{cortes2016colloidal}. Further, there are limitations of the mLdG model, e.g., the Isotropic-Smectic phase transition \cite{izzo2020landau,biscari2007landau} is outside the scope of the mLdG model. We also plan to develop variants of the mLdG model that can capture multiple phase transitions.

\section{Code availability}
The codes to produce the numerical results of this paper can be obtained
at https://github.com/BaomingShi/searching-minimizer-mLdG-model.

\section*{Acknowledgements}
BS would like to thank the University of Strathclyde for its support and hospitality when work on this paper was undertaken.

\section*{Appendix: Numerical method}\label{numerical method}
A (meta)stable state can be found by the gradient descent method, and a transition state can be found by the saddle dynamic \cite{2019High}. For the confinement problem in Section \ref{Sec: confinement}, we use finite difference methods for spatial discretization with mesh size $\delta x$. The discretization of the gradient flow of \eqref{energy_2D} is,
\begin{equation}
\begin{aligned}
&\frac{\Q_{n+1}-\Q_n}{\Delta t_n}=-K\Delta_{\delta x} \Q_n - A\cdot \Q_n-C\cdot tr(\Q_n^2)\Q_n\\
&-2B_0\cdot q^2/s_+\cdot\left(u_n\cdot D^2_{\delta x} u_n-\frac{tr(u\cdot D^2_{\delta x} u_n)}{2}\mathbf{I}_2\right)-2\cdot B_0\cdot q^4\cdot \frac{\Q_n}{s_+^2}u_n^2,\\
&\frac{u_{n+1}-u_n}{\Delta t_n}=-2B_0\Delta_{\delta x}^2 u_{n+1} - a u_n-c u_n^3-2B_0\cdot D^2_{\delta x}u_n:\left(q^2\cdot\left(\frac{\Q_n}{s_+}+\frac{\mathbf{I}_2}{2}\right)\right)\\
&-2B_0\cdot \nabla_{\delta x}\cdot \left(\nabla_{\delta x} \cdot\left(q^2\cdot\left(\frac{\Q_n}{s_+}+\frac{\mathbf{I}_2}{2}\right)u_n\right)\right)-2B_0 \cdot \left|q^2\cdot\left(\frac{\Q_n}{s_+}+\frac{\mathbf{I}_2}{2}\right)\right|^2 u_n,
\end{aligned}
\label{discretization}
\end{equation}
where $\Delta_{\delta x}^2, \Delta_{\delta x}, \nabla_{\delta x}$, $D^2_{\delta x}$ are the discretization of  $\Delta^2, \Delta, \nabla, D^2$, and $\Delta t_n$ is the Barzilai-Borwein (BB) step size \cite{BB} at the $n$-th iteration. In \eqref{discretization}, we discretize the fourth-order operator $\Delta^2$ implicitly to ensure the stability of the BB step size. 

In Section \ref{phase transition}, we study the phase transition problem with periodic boundary conditions, and we use the spectral method \cite{shen2011spectral} for spatial discretization,
\begin{equation}
    Q(x)=\sum_{k=-N/2}^{N/2} \Tilde{Q}_k e^{2\pi i kx/h}, Q\in V_Q,
        u(x)=\begin{cases}
        \sum_{k=-N/2}^{N/2} \Tilde{u}_k e^{2\pi i kx/h}, u\in V_u,\\
        \sum_{k=1}^{N+1} \Tilde{u}_k \sin\left(2k\pi x/h \right), u\in V,
    \end{cases}
    \label{spectral method for u}
\end{equation}
where $N$ is an even integer, and we choose $N=32$. Recall that $V=V_u \cap W^{1,2}_{0,\Omega}$, so we use the sine spectral method to discretize $u\in V$. By substituting \eqref{spectral method for u} in \eqref{energy:simple}, we obtain a discretized form of the energy,
\begin{equation}
    E(\Tilde{Q}_k,\Tilde{u}_k)\approx E(Q,u).
\end{equation}
This results in a function of $2(N+1)$ variables, and we directly search for the minimum by using the gradient descent method for finite-dimensional functions.

\bibliographystyle{unsrt}
\bibliography{references}
\end{document}